\title{Decision problems on geometric tilings} 
\titlerunning{Decision problems on geometric tilings} 
\author{Benjamin Hellouin de Menibus}{Université Paris-Saclay, CNRS, Laboratoire Interdisciplinaire des Sciences du Numérique, 91400, Orsay, France \and \url{https://www.lisn.fr/~hellouin/}}{hellouin@lisn.fr}{https://orcid.org/0000-0001-5194-929X}{}
\author{Victor H. Lutfalla}{Aix-Marseille Univ, CNRS, I2M, Marseille, France \and \url{https://www.lutfalla.fr/}
}{victor.lutfalla@math.cnrs.fr}{https://orcid.org/0000-0002-1261-0661}{ANR JCJC 2019 19-CE48-0007-01}
\author{Pascal Vanier}{Normandie Univ, UNICAEN, ENSICAEN, CNRS, GREYC, 14000
  128   Caen, France\and
  \url{https://vanier.users.greyc.fr/}}{pascal.vanier@unicaen.fr}{https://orcid.org/0000-0001-9207-9112}{ANR JCJC 2019 19-CE48-0007-01}
\authorrunning{B. Hellouin de Menibus, V. Lutfalla, P. Vanier}
\keywords{Tilings, Subshifts, Domino problem, Decision problem, Undecidability} 
\newcommand{\NN}{\ensuremath{\mathbb{N}}\xspace}
\newcommand{\ZZ}{\ensuremath{\mathbb{Z}}\xspace}
\newcommand{\RR}{\ensuremath{\mathbb{R}}\xspace}
\newcommand{\norm}[1]{\left\lVert#1\right\rVert}
\newcommand{\manyinf}{\ensuremath{\leq_m}\xspace}
\newcommand{\sizu}{\ensuremath{\Sigma^0_1}\xspace}
\newcommand{\pizu}{\ensuremath{\Pi^0_1}\xspace}
\newcommand{\haltingprob}{\ensuremath{\mathbf{HP}}\xspace}
\newcommand{\shape}{\ensuremath{t}\xspace}
\newcommand{\shapeset}{\ensuremath{\mathcal{T}}\xspace} 
\newcommand{\tile}{\ensuremath{\mathbf{t}}\xspace}
\newcommand{\tiling}{\ensuremath{T}\xspace}
\newcommand{\tileset}{\ensuremath{\mathbf{T}}\xspace}
\newcommand{\pattern}{\ensuremath{\mathbf{P}}\xspace}
\newcommand{\support}[1]{\ensuremath{\mathrm{supp}(#1)}\xspace}
\newcommand{\appears}{\sqsubseteq}
\newcommand{\avoids}{\not\sqsubseteq}
\newcommand{\locallyallowed}{\ensuremath{\mathcal{L}}\xspace} 
\newcommand{\domino}{\textsc{domino}}
\newcommand{\subshift}{\ensuremath{X}\xspace}
\newcommand{\patternsize}{\ensuremath{D_{max}}\xspace} 
\newcommand{\position}{\ensuremath{f}\xspace} 
\begin{document}

\maketitle

\begin{abstract}
We study decision problems on geometric tilings. First, we study a variant of the Domino problem where square tiles are replaced by geometric tiles of arbitrary shape. We show that this variant is undecidable regardless of the shapes, extending the results of \cite{hellouin2023} on rhombus tiles. This result holds even when the geometric tiling is forced to belong to a fixed set. Second, we consider the problem of deciding whether a geometric subshift has finite local complexity, which is a common assumption when studying geometric tilings. We show that this problem is undecidable even in a simple setting (square shapes with small modifications).
\end{abstract}

\paragraph*{Disclaimer} The figures in this article are somewhat optimised for being viewed on a computer.

\section{Introduction}

There are two main settings when talking about tilings: geometric tilings and symbolic tilings. Geometric tilings are coverings of the plane by arbitrary shapes, while symbolic tilings are tilings of $\ZZ^2$ by coloured unit squares where some forbidden patterns are not allowed to appear. One main fundamental difference between the two settings is that geometric tilings have no rules outside of the geometric constraints given by the shapes themselves, while in the symbolic case the corners of the unit squares must be on integer coordinates. 

In this article, we study two decisions problems concerning geometric tilings: the domino problem in a geometric tiling, and the problem of finite local complexity; we prove that both problems are undecidable.

The undecidability of the domino problem \cite{berger1966} is a founding result of the theory of symbolic tilings: given a set of forbidden patterns, is it possible to tile the plane while avoiding this set of forbidden pattern? The domino problem has subsequently been studied on many structures: many groups \cite{aubrun2018}, the hyperbolic plane \cite{margenstern2008, kari2008, goodman-strauss2010}, tilings by rhombuses by the first two authors \cite{hellouin2023}, other symbolic tilings \cite{aubrun2020}, self-similar fractal structures \cite{barbieri2016}, etc. In this article we focus on a fusion of the two models, \emph{symbolic-geometric tilings}: they are geometric tilings with \emph{colored local rules}, that consist in adding decorations and a finite number of forbidden patterns on these decorations (precise definitions can be found in \autoref{defn:symbgeom}). Colored local rules have been used to capture properties of geometric tilings, such as when they are generated by a substitution \cite{goodman1999} or planar \cite{fernique2019}.

Our first problem of interest is the domino problem, as studied in \cite{hellouin2023} for rhombus tilings: fixing a nonempty set of geometric tilings $X$, given as input some colored local rules on the shapes of $X$, can we tile the plane with shapes of $X$ decorated in such a way that they avoid all forbidden patterns? We show that this problem is always undecidable when the shapes are compact and simply connected. The proof can be seen as representing a quasi-isometry between $\mathbb Z^2$ and the adjacency graph of a geometrical tiling, in order to reduce the domino problem on $\mathbb Z^2$ which is known to be undecidable. We discuss the role of quasi-isometries and related results in Section~\ref{sec:quasi-isometry}.

In geometric tilings there are potentially infinitely many ways to put two shapes side by side inside a valid tiling. This is the notion of \emph{local complexity}: how many patterns may appear in a ball of some radius up to translation. In symbolic tilings, this complexity is always finite. Geometric tilings sharing this property are said to have \emph{finite local complexity} or \emph{FLC} for short. FLC is a fundamental property in the study of geometric tilings and has been introduced by Kenyon~\cite{kenyon1992}\footnote{Under the terminology \emph{finite number of local pictures}.}.

Most results about geometric tilings are under the assumption of FLC and most well known examples such as Penrose tilings or the hat \cite{smith2024} have FLC, although tilings without FLC have also been studied (see e.g. \cite{NPB2015}). FLC makes geometric tilings behave similarly to symbolic tilings, in the sense that locally valid patterns may be enumerated by an algorithm.

The second problem we study is the following: given a geometric tileset, is it decidable whether it has FLC? We show that having FLC is a $\Sigma^0_1$-complete problem and is thus undecidable. 

The paper is organised as follows. After providing the relevant definitions in \autoref{sec:defs}, we prove the undecidability of the domino problem for symbolic geometric tilings in \autoref{sec:dominoUndec} and the undecidability of FLC for geometric tilings in \autoref{sec:undecFLC}. 

\section{Definitions and settings }\label{sec:defs}

\subsection{Computability}
A decision problem $A: \NN\subseteq \{0,1\}$ is said to be \emph{computable} when there exists a Turing machine $M$ that halts on all inputs $x\in \NN$, and which outputs $1$ iff $A(x) = 1$. The canonical undecidable problem is the \emph{halting problem}, denoted \haltingprob: given (the number corresponding\footnote{Using the usual enumeration of Turing machines.} to) a Turing machine as an input, does it halt on an empty input tape? 

The halting problem is \emph{recursively enumerable}: there is a Turing machine which never halts and outputs every input that maps to $1$ (in this case, every Turing machine that halts). Moreover, it is complete for the class of recursively enumerable problems under \emph{many-one reduction}, defined as follows: we say that $A \manyinf B$ if there exists a total computable function $f$ such that $B(f(x)) = 1 \Leftrightarrow A(x) = 1$. The completeness of the halting problem means that for any recursively enumerable problem $A$, $A\manyinf \haltingprob$.

We denote the class of r.e. and co-r.e problems as $\sizu$ and $\pizu$, respectively, since they are the first levels of the arithmetical hierarchy. This means that \haltingprob is $\sizu$-complete and its complement $\overline{\haltingprob}$ is $\pizu$-complete.

\subsection{Symbolic tilings}
Given a finite alphabet $\Sigma$, a \emph{configuration} $c$ is an assignment of letters to each coordinate of the plane $\ZZ^2$: $c : \ZZ^2 \to \Sigma$. A \emph{pattern} $p : D \to \Sigma$ is an assignment of letters to a finite portion of the plane $D\Subset \ZZ^2$. The shift of a configuration $c$ by a vector $v\in\ZZ^2$ is denoted $\sigma_v(c)$: $\sigma_v(c)(z)=c(z+v)$. A pattern $p : D\to\Sigma$ \emph{appears} in a configuration $c$, denoted $p\appears c$, if there exists $z\in\ZZ^2$ such that $\sigma_z(c)_{|D}=p$. Conversely, we denote $p \avoids c$ when a pattern does not appear in a configuration.

We define a distance between configurations, two configurations $x$ and $y$ being near each other when they coincide on a big central ball:
\[ d(x,y) = 2^{-min\{\norm{z} \mid z\in\ZZ^2,~ x(z)\neq y(z) \}}\text{.}\]
This defines a topology on the set of all configurations $\Sigma^{\ZZ^2}$. A \emph{subshift} is a closed, shift-invariant subset of $\Sigma^{\ZZ^2}$.
Subshifts are exactly the sets of configurations which can be defined by forbidding sets of patterns. That is to say, for each subshift $X\subseteq \Sigma^{\ZZ^2}$, there exists a family of forbidden patterns $F$ such that
\[
X=\left\{ x \in \Sigma^{\ZZ^2}\mid \forall p \in F, p\avoids x \right\}
\]
A \emph{subshift of finite type} or \emph{SFT} is a subshift for which there exists a finite family of forbidden patterns defining it. 

\begin{definition}[Classical domino problem \cite{kahr1962}]
  The domino problem denoted \domino{} is the following problem: given as input be an SFT $X$ given by an alphabet $A$ and a family of forbidden patterns $F$, is $X$ non-empty?
\end{definition}

\begin{theorem}[\cite{berger1966}]
  The classical domino problem is undecidable and $\Pi_1^0$-complete.
\end{theorem}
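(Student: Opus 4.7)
The plan is to establish the two halves of $\Pi_1^0$-completeness separately: membership in $\Pi_1^0$ and $\Pi_1^0$-hardness.

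For membership, I would show that emptiness of an SFT is recursively enumerable. Given an SFT $X$ defined by a finite family $F$ of forbidden patterns, all fitting in some $d\times d$ window, one enumerates all $n \times n$ patterns over $\Sigma$ for increasing $n$ and checks whether each one contains a translate of some forbidden pattern. A standard compactness argument, essentially König's lemma applied to the finitely branching tree of locally valid central patterns, yields that $X$ is empty if and only if there is some $n$ for which every $n \times n$ pattern violates $F$. This gives a semi-decision procedure for emptiness, so non-emptiness lies in $\Pi_1^0$.

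For hardness, I would reduce $\overline{\haltingprob}$ to \domino{}: given a Turing machine $M$, effectively build an SFT $X_M$ that is non-empty iff $M$ does not halt on an empty input. The rough idea is to encode space-time diagrams of $M$: a row represents a tape configuration, consecutive rows represent consecutive time steps, and local SFT rules enforce correct transitions while forbidding any halting state from appearing. If $M$ runs forever, its space-time diagram yields a valid configuration; if it halts after some step, then no sufficiently tall diagram exists.

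The main obstacle, and the technical heart of the proof, is that a naive space-time encoding only tiles a half-plane, whereas we need valid tilings of all of $\ZZ^2$ with no distinguished origin; a periodic background cannot work either, because by compactness a periodic tiling would then force $M$ to halt to bound the computation. I would overcome this using Berger's (or Robinson's simpler) construction of an aperiodic SFT whose configurations carry a hierarchical structure of nested square regions of unbounded size. Inside each such square, the SFT forces a fresh simulation of $M$, with the halting state forbidden. Because squares of arbitrarily large side appear in every configuration, $X_M$ is non-empty precisely when $M$ runs without halting for arbitrarily many steps, i.e.\ never halts. The delicate part is designing the aperiodic layer, proving that every valid tiling contains squares of every size, and verifying that the Turing machine simulation layer composes consistently with it; once this is in place, emptiness of $X_M$ is equivalent to halting of $M$, which gives $\overline{\haltingprob} \manyinf \domino$ and hence $\Pi_1^0$-hardness.
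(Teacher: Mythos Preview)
The paper does not prove this theorem at all: it is stated with a citation to Berger~\cite{berger1966} and used as a black box throughout. Your outline is a correct sketch of the standard proof (compactness for $\Pi_1^0$ membership, and the Berger/Robinson aperiodic hierarchical tileset with embedded Turing-machine space-time diagrams for hardness), which is essentially the content of the cited reference, so there is nothing to compare against here.
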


It is well-known that every SFT $X$ is conjugate to another SFT $\phi(X)$ defined by nearest-neighbour forbidden patterns (i.e. connected support of size $2$), and $\phi(X)$ is computable from $X$.
As a direct corollary, the classical nearest-neighbour domino problem is also undecidable and $\Pi_1^0$-complete.

\subsection{Geometric tilings}

\begin{definition}[Geometric tilings]
  A shape $\shape$ is a closed topological disk (compact and simply connected).
  A shapeset $\shapeset$ is a finite set of shapes.
  A tiling $\tiling$ is a covering of the plane by countably many shapes with disjoint interior.
  A $\shapeset$-tiling is a tiling where all shapes are in $\shapeset$ up to translation.
\end{definition}

As a running example we consider the square-triangle shapeset $\shapeset_\triangle$ containing the square (in 3 orientations) and the equilateral triangle (in 4 orientations), illustrated on Fig~\ref{fig:shapeset}.
For technical reasons detailed in Section \ref{sec:dominoUndec} we consider all sides equal to $\tfrac{5}{2}$.

\begin{figure}[htp]
  \begin{center}
    \includegraphics[width=0.4\textwidth]{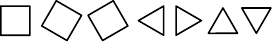}
  \end{center}
  \caption{The square-triangle shapeset $\shapeset_\triangle$.}
  \label{fig:shapeset}
\end{figure}

\begin{remark}[Shapes and tiles]
  The usual terminology is tiles rather than shapes.
  However we chose to differentiate between shapes (purely geometric) and tiles (both geometric and symbolic) that we define below.
\end{remark}

\begin{remark}[Shapes and polygons]
  As argued in \cite{kenyon1992}, any geometric tiling by topological disks is combinatorially equivalent to a polygonal tiling (where all shapes are polygons), though some properties of the tilings such as self-similarity might be lost in this equivalence. We write our results for topological disks as it does not make the proofs significantly more complicated; the next proposition is the only technical property that we use.
\end{remark}

\begin{proposition}\label{prop:countable-segments}
The boundary of a shape $t$ contains at most countably many maximal segments (that are not included in a larger segment).
\end{proposition}

\begin{proof}
A shape $t$ is, by assumption, homeomorphic to a unit disk. The images under this homeomorphism of  maximal segments in the boundary of $t$ are disjoint connected subsets of the unit circle, that is, intervals, that each contain some point of the form $e^{ri\pi}$ for $r\in \mathbb Q$; therefore there cannot be more than countably many of them.
\end{proof}

\begin{definition}[Pattern and support]
  We call \emph{pattern} $\pattern$ a finite and simply connected set of non-overlapping shapes.
  The \emph{support} of $\pattern$, denoted $\support{\pattern}$ is the union of its shapes. We usually consider patterns up to translation, and the context should make it clear.
\end{definition}

We write $\pattern \in \tiling$ when pattern $\pattern$ is a subset of a tiling $\tiling$, and $\pattern \appears \tiling$ (resp. $\pattern \appears X$) when a pattern appears up to translation in $T$ (resp.~in some tiling $\tiling \in X$).

We choose a distance between geometric tilings that generalises the usual distance on $\mathbb{Z}^2$ configurations. Two tilings $\tiling_1$ and $\tiling_2$ are $\varepsilon$-close if they are identical on a disk of radius $\varepsilon^{-1}$ up to a translation of norm at most $\varepsilon$. See \cite{robinson2004, sadun2008, lutfalla2022} for full details.

With the topology induced by this distance we generalise the concept of subshift to geometric subshifts as follows.

\begin{definition}[Geometric subshifts]
  Given a finite set of shapes $\shapeset$, a \emph{geometric subshift} $X$ on $\shapeset$ is a set of $\shapeset$-tilings that is invariant under $\mathbb{R}^2$-translations and closed for the geometric tiling topology.
\end{definition}

In particular, a geometric subshift is purely geometric if it contains exactly all possible $\shapeset$-tilings for some shapeset $\shapeset$; in other words, there are no other constraints than those imposed by the geometry of $\shapeset$. Any geometric subshift is a subset of the corresponding purely geometric subshift that avoids some set of forbidden patterns:

\begin{definition}[Valid tiling] 
  Given a finite set of patterns $G$, we say that a tiling $\tiling$ is \emph{valid} for $G$ or \emph{avoids} $G$ when for any $\pattern \in G$, $\pattern \avoids \tiling$.
\end{definition}




\begin{definition}[FLC]\label{def:FLC}
  A set of geometric tilings $X$ has finite local complexity (FLC) when for any compact $K\subset \mathbb{R}^2$, up to translation, there exist finitely many patterns $\pattern \appears X$  such that $\support{\pattern}\subset K$.
  Equivalently, there are finitely many 2-tile patterns in $X$. 
\end{definition}

Typical examples of geometric subshifts with FLC are the polygonal edge-to-edge tilings. For example, the purely geometric subshift given by $\shapeset_\triangle$ does not have FLC (even a tiling using only a single square tile can have lines or columns shifted by arbitrary offsets), but restricting to edge-to-edge tilings gives it FLC. Continuing, the subshift $X_\triangle$ of edge-to-edge tilings containing no bars (3 consecutive squares) and no tribars (4 unit triangles aligned as in the triangular grid) also has FLC; see Fig.~\ref{fig:forbbiden_patterns}. 

\begin{figure}[htp]
  \center
    \includegraphics[width=0.3\textwidth]{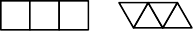}
  \caption{The two forbbiden patterns of $X_\triangle$ up to rotation and reflexion. There are in total $12$ forbidden patterns up to translation.}
  \label{fig:forbbiden_patterns}
\end{figure}

\begin{figure}[htp]
  \begin{center}
    \includegraphics[width=0.8\textwidth]{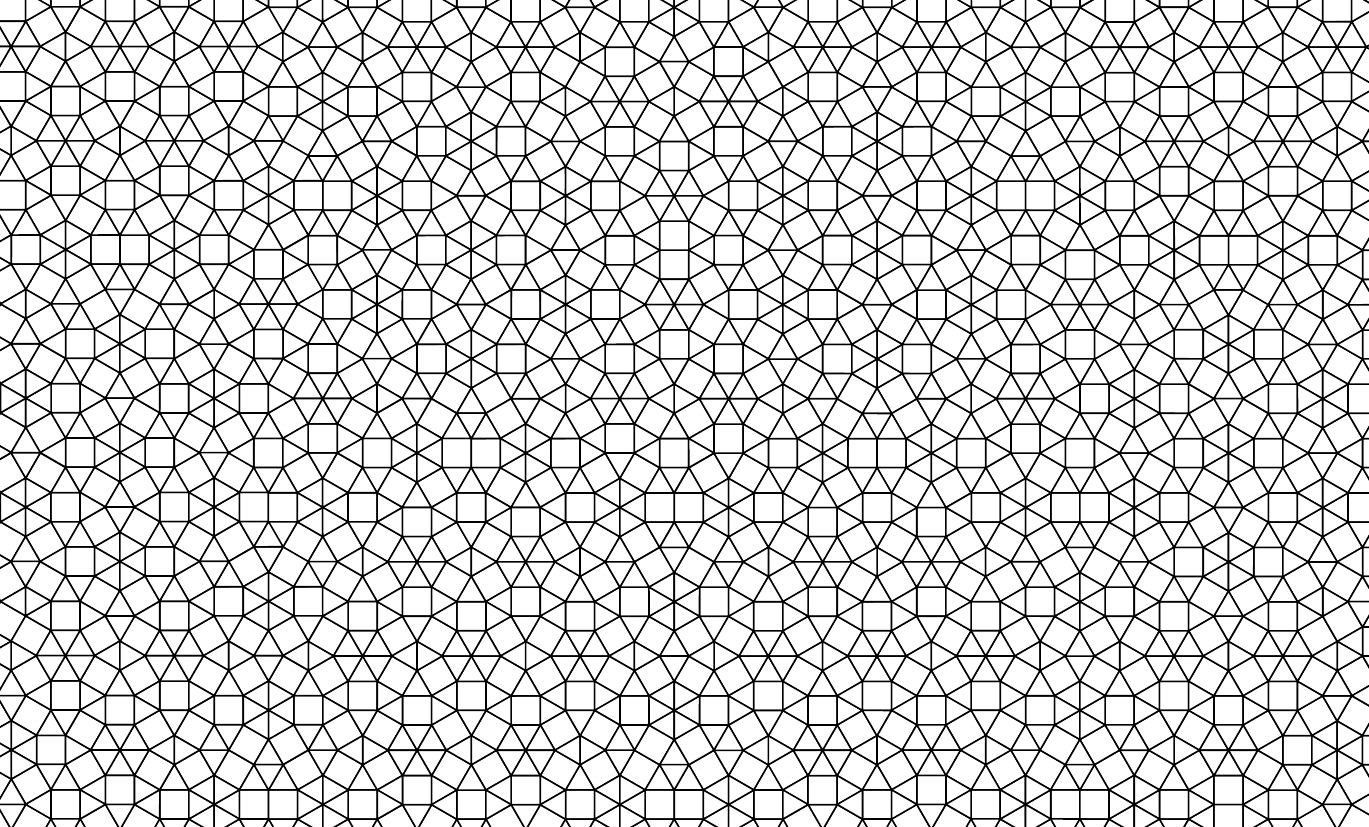}
  \end{center}
  \caption{A typical fragment of a tiling in $X_\triangle$.}
  \label{fig:square-triangle_tiling}
\end{figure}

\subsection{Symbolic-geometric tiling and the domino problem}\label{defn:symbgeom}

\begin{definition}[Symbolic-geometric tilings]
  Given a finite alphabet $A$ and a finite shapeset $\shapeset$, a \emph{symbolic-geometric tile} $\tile$ is a shape $\shape$ with a label $a$, \emph{i.e.}, $\tile = (\shape, a)\in \shapeset\times A$.
  We define the \emph{label erasor} or \emph{projection} $\pi$ as  $\pi((\shape,a)) \coloneq \shape$.
  We call \emph{symbolic-geometric tileset} a subset $\tileset$ of $\shapeset\times A$.
  We call \emph{symbolic-geometric tiling} $\tiling$ a covering of $\RR^2$ by non-overlapping symbolic-geometric tiles, that is $\pi(\tiling) \coloneq \{ \pi(\tile), \tile \in \tiling\}$ is a geometric tiling.
\end{definition}

We generalise as expected the notions of symbolic-geometric patterns $\pattern$ and $\tileset$-tiling.

\begin{definition}[Domino on $X$]
  Given a finite shapeset $\shapeset$ and a set of $\shapeset$-tilings $X$, the problem $\domino_X$ is defined as:
  \begin{description}
  \item[Input:] a finite alphabet $A$, a tileset $\tileset\subset \shapeset \times A$ and a finite set $G$ of forbidden $\tileset$-patterns.
  \item[Output:] does there exist a $\tileset$-tiling $\tiling$ that is valid for $G$ and such that $\pi(\tiling)\in X$?
  \end{description}
\end{definition}

Forbidden patterns on decorated shapes have also been called \emph{colored local rules} in the litterature. The case where $\shapeset$ contains only a unit square and $X$ is the subshift of edge-to-edge square tilings (containing a single point up to translation) corresponds to the classical domino problem. This definition, as well as an equivalent of the following result, appeared in \cite{hellouin2023} in the particular case of rhombus shapes.

\section{The domino problem is undecidable on every set of tilings}\label{sec:dominoUndec}

\begin{theorem}[Main result]\label{thm:main}
  Let $X$ be a non-empty set of tilings with finite local complexity.
  The domino problem on $X$ is $\Pi_1^0$-hard and therefore undecidable.
\end{theorem}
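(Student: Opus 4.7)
The plan is to prove $\Pi_1^0$-hardness by a reduction from the classical nearest-neighbour domino problem. Given a $\ZZ^2$ SFT $Y$ with alphabet $\Sigma$ and nearest-neighbour forbidden patterns $F$, I would construct a symbolic-geometric tileset $\tileset \subset \shapeset \times A$ and a finite set $G$ of forbidden symbolic-geometric patterns so that $Y$ is non-empty if and only if some valid $\tileset$-tiling has its geometric projection in $X$.

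The central idea is to use the decoration to impose a virtual $\ZZ^2$-grid on top of any tiling in $X$. Each tile would be labelled by a packet encoding a finite list of horizontal and vertical \emph{threads} crossing it, together with a $\Sigma$-symbol tagged on each thread: the symbol represents one cell of the encoded $\ZZ^2$-configuration. The potato assumption fixes two distinguished directions (horizontal and vertical) along which threads propagate; because each shape has only countably many maximal horizontal or vertical boundary segments, threads can be placed generically so as to always cross tile boundaries transversally rather than slide along them. The FLC assumption then ensures that, across all tilings of $X$, the combinatorial types of thread crossings inside a tile and between two adjacent tiles are drawn from a finite set, so that $A$, $\tileset$ and $G$ stay finite and are computable from $Y$.

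The forbidden patterns $G$ would fall in three groups: (i) \emph{thread continuity}, forcing adjacent tiles to agree on the symbols of threads they share; (ii) \emph{cell consistency}, ensuring that the symbol on a horizontal thread matches the symbol on the crossing vertical thread at each grid intersection; and (iii) \emph{SFT validity}, transcribing the forbidden patterns of $F$ as forbidden pairs of symbols on horizontally or vertically neighbouring cells. For correctness, fix any $T_0 \in X$. If $Y$ is non-empty, any $y \in Y$ induces a valid decoration of $T_0$ by transporting $y$ onto a grid placed generically in $T_0$. Conversely, any valid decorated tiling in $X$ yields a $\ZZ^2$-configuration by reading off the thread symbols, and the constraints of $G$ force it to lie in $Y$.

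The main obstacle will be the thread bookkeeping: threads must be defined locally (each tile's label only sees its own threads and those it shares with its neighbours) while extending coherently to infinite straight lines across the whole plane. The potato assumption and FLC together reduce this to a finite combinatorial problem, but setting up a finite alphabet that captures which threads enter and exit each side of each tile, with enough rigidity that local thread labels assemble into a global $\ZZ^2$-grid, is the delicate part. A secondary subtlety is having enough flexibility in $X$ to encode arbitrary $\ZZ^2$-configurations when $X$ is geometrically rigid; this is handled by exploiting the continuous freedom in the positions of the threads rather than any combinatorial freedom in $X$ itself.
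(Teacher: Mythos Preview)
Your overall strategy matches the paper's exactly: reduce from the nearest-neighbour $\ZZ^2$ domino problem by using decorations to superimpose a virtual unit grid on any tiling of $X$, with three families of forbidden patterns enforcing grid coherence, symbol consistency at crossings, and the SFT constraints $F$. The two-direction correctness argument you outline is also the one the paper gives.

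The genuine gap is the step you flag yourself as ``the delicate part'': making the alphabet $A$ finite. Your justification that ``FLC ensures that \dots the combinatorial types of thread crossings inside a tile and between two adjacent tiles are drawn from a finite set'' does not go through. FLC bounds the number of $2$-tile geometric patterns; it says nothing about the positions of threads \emph{within} a tile, which are a priori continuous real parameters. If a tile label has to record where its horizontal and vertical threads enter and exit (even only up to which boundary segment they cross), there is no reason this is finite, and without that information adjacent tiles cannot agree on which threads they share. The paper's resolution is a discretisation trick: after rescaling so every shape contains a unit square, overlay each shape with a fixed $\tfrac14$-grid and record only the \emph{phase} $(i,j)\in\{0,1,2,3\}^2$ of the virtual unit grid modulo this $\tfrac14$-grid. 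This gives exactly $16$ ``placeholder'' variants of each shape, hence a finite alphabet, while the $\tfrac14$ resolution is fine enough that adjacent tiles can be forced (by $D_{max}$-sized forbidden patterns, finitely many by FLC) to agree on a common grid placement. The potato hypothesis is used precisely here, to guarantee that a generic grid placement avoids sliding along boundary segments. Your proposal has identified the right architecture but not this discretisation device, and without it the construction does not yield a finite tileset.
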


This generalises the classical domino problem (when $X$ is a single edge-to-edge tiling using a single square shape) as well as Theorem 1 from \cite{hellouin2023} for geometric subshifts on rhombus shapes. However, the proof in \emph{op.~cit.} made heavy use of the particular structure of rhombus tilings, while the constructions in the present paper are much more generic.

We proceed by reduction to the classical nearest neighbour domino problem.
Let $(A,F)$ be an input, that is, $A$ is a finite alphabet and $F$ is a finite set of nearest-neighbour forbidden $\mathbb{Z}^2$ patterns. We build a tileset $\tileset$ on shapes $\shapeset$ and a set of forbidden patterns $G$ such that there exists a $\tileset$-tiling $T$ valid for $G$ with $\pi(T)\in X$ if, and only if, the SFT $(A,F)$ is not empty.

We illustrate our construction throughout on the classical quarter-plane subshift (see Fig.~\ref{fig:quarter-plane}) and the geometric subshift $X_\triangle$, where the edge-to-edge condition implicitly induces the list $\locallyallowed$ of allowed 2-shapes patterns. 
\begin{figure}[htp]
  \center

  \includegraphics[width=0.6\textwidth]{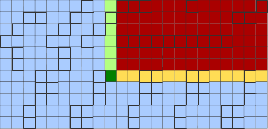}

  \caption{A fragment of a typical tiling of the nearest-neighbour quarter-plane subshift on blue and red tiles (with green and yellow boundary tiles).\\
    Any two-tiles pattern not appearing in this fragment is forbidden.}
  \label{fig:quarter-plane}
\end{figure}

\subsection{Coloured tiles}\label{sec:placeholder}

We assume (by rescaling if necessary) that each shape in $\shapeset$ strictly contains a unit square and does not have in its boundary two vertical or horizontal segments at integer distance from each other, using Proposition~\ref{prop:countable-segments}.
For each shape, we pick an arbitrary origin point from which we draw a square grid of width $\tfrac{1}{4}$, as illustrated in Fig.~\ref{fig:quadrillage} (we pick a vertex as origin point in the figures).
We call \emph{checkered tiles} these shapes with the superimposed $\tfrac{1}{4}$-grid.

The choice to superimpose a grid of width $\tfrac{1}{4}$, together with the choice of coding cells below, ensures that: any labelled tile contains a coding cell (Fig.~\ref{fig:placeholder_tiles}), adjacent tiles have coherent colourings of their common coding cells (Fig.~\ref{fig:forbidden_GF}), and in any tiling the coding cells are arranged on a valid $\mathbb{Z}^2$ grid (Section~\ref{sec:grid_structure}).
This grid could be replaced with any finer grid (for example of width $\tfrac{1}{k}$ with $k\geq 4$): in that case we might change the choice of coding cells.

\begin{figure}[htp]
  \center
  \includegraphics[width=0.8\textwidth]{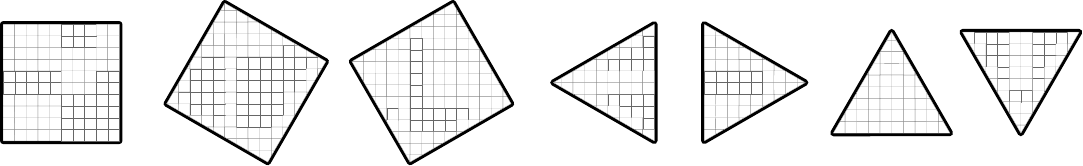}
  \caption{The $\tfrac{1}{4}$ square grid superimposed on the square and triangle shape of $X_\triangle$ forming the checkered square-triangle tiles. Recal that we defined the square and triangle shapes with sides $\tfrac{5}{2}$.}
  \label{fig:quadrillage}
\end{figure}

For every $t\in \shapeset$, we define $16$ placeholder tiles, one for each $0\leq i,j \leq 3$, as shown in Fig.~\ref{fig:placeholder_tiles}. In the $(i,j)$-placeholder tile, the cell $(i',j')$ is a \emph{coding cell} if $i'=i\bmod 4$ and $j'=j\bmod 4$ or a \emph{linking cell} if either $i'=i\bmod 4$ or $j'=j\bmod 4$. A cell such that $i'=i\bmod 4$ is also a row cell, and a cell such that $i'=i\bmod 4$ is also a column cell; note that a coding cell is both row and column.

We define $\tileset_0$ the set of placeholder tiles.

\begin{figure}[htp]
  \center
  \includegraphics[width=\textwidth]{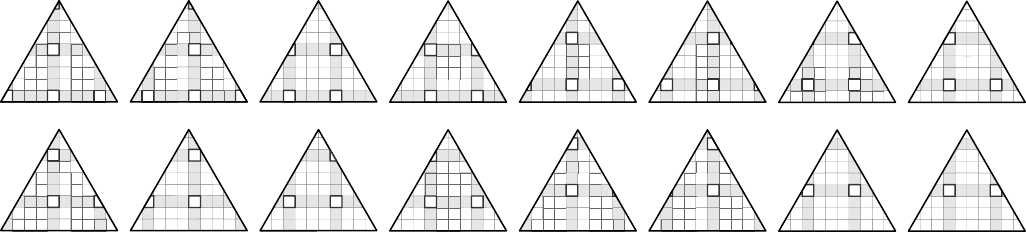}
  \caption{The $16$ placeholder tiles for a triangle shape. The coding cells are shown with a bold boundary and the linking cells in gray.}
  \label{fig:placeholder_tiles}
\end{figure}

We now decorate each placeholder tile $\tile_0$ in $\tileset_0$, using the alphabet of the SFT $(A,F)$: we define a coloured tile $\tile$ for each possible $A$-labelling of the coding cells. We denote by $\tileset$ the set of coloured tiles obtained from this process. As the shapeset is finite, the diameter of the shapes is bounded and therefore they contain a bounded number of coding cells; assuming each tile in $\tileset_0$ contains at most $n$ coding cells, it defines at most $|A|^n$ coloured tiles, so $\tileset$ is finite. 

\begin{figure}[htp]
  \center
  \includegraphics[width=\textwidth]{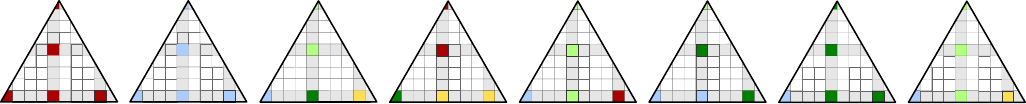}
  \caption{8 colourings of the first placeholder tile of Fig.~\ref{fig:placeholder_tiles}.\\
    As there are 5 coding cells and 5 colours, the total number of coloured tiles for this placeholder is $5^5$, most of which are later forbidden by $G_F$. The five leftmost tiles are allowed and the three rightmost tiles are later forbidden as they contain a forbidden pattern or force a forbidden pattern.}
  \label{fig:coloured_tiles}
\end{figure}

\begin{figure}[htp]
  \center
  \includegraphics[width=0.6\textwidth]{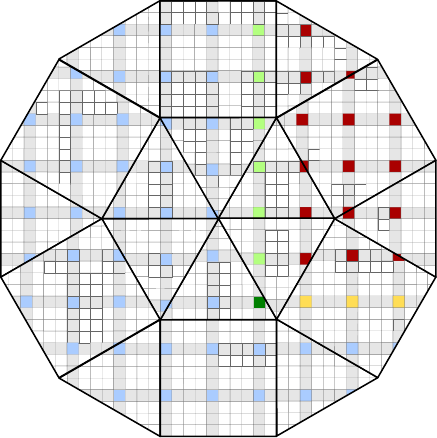}
  \caption{A typical allowed pattern of coloured tiles.}
  \label{fig:coloured_pattern}
\end{figure}

\subsection{Forbidden patterns}
We define a set of forbidden patterns $G = G_{grid}\cup G_{coding} \cup G_{F}$ that enforces a global grid structure among coding cells of all coloured tiles. We forbid all coloured patterns that break the grid structure ($G_{grid}$), do not encode a $\mathbb Z^2$ tiling in the coding cells ($G_{coding}$), or induce a forbidden pattern for $F$ in the coding cells ($G_F$). Figure~\ref{fig:coloured_pattern} illustrates a typical allowed pattern.

As the set $X$ has FLC, there is a finite list $\locallyallowed$ of allowed $2$-shapes patterns. That is, any pattern $\pattern$ of two adjacent tiles appearing in any tiling $\tiling \in X$ is in $\locallyallowed$ up to translation.
Notice that the set $X$ (or its tileset) is not a part of the input, so we do not require the list $\locallyallowed$ to be computable from $X$.

\subsubsection{Support of the forbidden patterns}

\begin{definition}[$r$-patterns]
For $r>0$, a \emph{$r$-pattern} $P$ is a geometric $\shapeset$-pattern locally allowed by $\locallyallowed$, such that there is a \emph{center} $x\in \support{P}$ such that the (square) ball $B(x,r)$ is included in $\support{P}$ and $P$ is minimal for $x$ in the sense that removing a tile from $P$ breaks simple connectedness or the condition on $\support{P}$.

By analogy, a \emph{labelled $r$-pattern} is a labelled $\tileset$-pattern whose projection is a $r$-pattern.
\end{definition}

Notice that the minimality condition is relative to the choice of center $x$, so a $r$-pattern needs not be minimal for inclusion, in the sense that a $r$-pattern may be strictly included in another $r$-pattern. However, different choices of $x$ may be possible for the same $r$-pattern.

Let $D_{max}$ be the maximal diameter of the shapes in $\shapeset$. Given a $r$-pattern $c$, we may assume that its center is $0$ up to translating the pattern. Since a tile in $c$ either intersects $B(0,r)$ or fills a hole left by tiles that intersect $B(0,r)$, it follows that $c$ is contained in $B(0, r+D_{max})$. By finite local complexity, the set of labelled $r$-patterns is finite.

The forbidden patterns defined below are all $D_{max}$-patterns.

\subsubsection{Grid structure}
\label{sec:grid_structure}

\begin{definition}[Grid-like patterns]\label{def:gridlike}
A labelled (or placeholder) pattern $P$ is \emph{grid-like} if there is a grid $g_{x_0, y_0} \coloneq \{(x,y)\in \RR^2 | (x-x_0) \in \ZZ \vee (y-y_0)\in\ZZ \}$ such that:
\begin{enumerate}
\item the intersection of $g_{x_0, y_0}$ with the tile boundaries and the placeholder grid boundaries does not contain any segment;
\item $g_{x_0, y_0}$ contain no tile triple point, that is, a position that belong to the boundary of three tiles or more;
\item if a horizontal (resp. vertical) line in $g_{x_0, y_0}$ crosses the interior of a cell $c$, then $c$ is a row cell (resp. column cell).
\end{enumerate}
\end{definition}
This means that a cell whose interior intersects both a horizontal and vertical line of the grid $g_{x_0,y_0}$ is a coding cell. If the coding cell is a full cell (not cut by the boundary of a tile), it contains a grid vertex.

We define $G_{grid}$ as the set of all non grid-like $D_{max}$-patterns, which is a finite set.

Recall that by hypothesis, $D_{max}$ is not an integer and $D_{max}\geq \sqrt{2}$ as the interior of any tile contains a unit square.
Note that, given a labelled pointed $D_{max}$-pattern $P$ and a choice of center $x$, $P$ contains the tile containing $x$ and all of its neighbours.

Grid-like labelled patterns exist on any geometric pattern.
Indeed, take a pattern $P_0$ of unlabelled shapes and the induced pattern $P_1$ of checkered tiles.
Given a cell $c$ of a checkered tile, the set of positions $(x_0,y_0)$ in the interior of $c$ such that the grid $g_{x_0,y_0}$ contains a tile triple point or a segment intersection with the tile boundaries or the placeholder grid boundary is countable (by Proposition~\ref{prop:countable-segments}, shapes have at most countably many maximal horizontal and vertical segments). By a simple counting argument, there are uncountably many positions $(x_0,y_0)$ in the interior of $c$ whose grid satisfies conditions 1 and 2, and for each such position it is straightforward to check that there is a unique choice of placeholder tile for each checkered tile that satisfies condition 3: since each tile contains a vertex of the grid, this forces the position of a coding cell. 

\begin{figure}[htp]
  \center \includegraphics[width=0.5\textwidth]{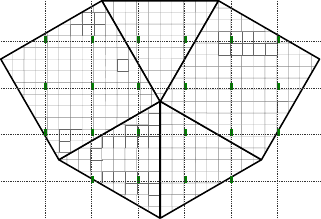}
  \caption{Finding a valid grid placement in a checkered pattern. In green: the set of valid choices for the center of the grid.}
  \label{fig:grid-like_position}
\end{figure}

In a coloured pattern, the set of valid choices for $(x_0, y_0)$ (modulo $1$) for condition 3 is contained in a rectangle drawn in green in Fig.~\ref{fig:grid-like_position}. This stems from the fact that each placeholder tile contains a full coding cell which yield a square of possible grid positions of side length $1/4$, so the set of valid choices is a finite intersection of squares. The pattern is grid-like iff this rectangle is not empty, as conditions 1 and 2 remove only countably many positions, and so deciding if a pattern is grid-like is a computable process.

\subsubsection{Coding forbidden patterns}

\begin{definition}[Coherent patterns]
A grid-like pattern is \emph{coherent} if two coding cells at distance at most $\tfrac{1}{4}$ have the same colour.
\end{definition}

The set of forbidden patterns $G_{coding}$ contains every (coloured) grid-like $\patternsize$-patterns that are not coherent. This ensures a global coding consistency in allowed patterns and tilings.

\begin{figure}[htp]
  \center
  \includegraphics[width=\textwidth]{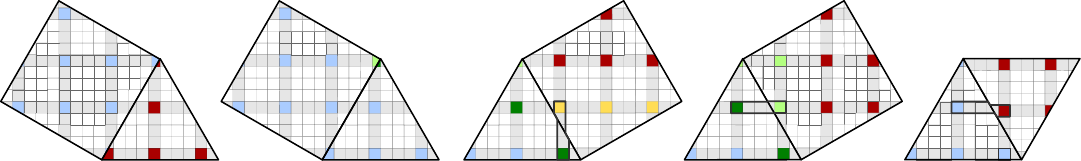}
  \caption{To the left: two non-coherent patterns. Any $\patternsize$-pattern containing one of them is forbidden in $G_{coding}$.\\
  To the right: three patterns that induce a forbidden patterns from $F$. Any $\patternsize$-pattern containing one of them is forbidden in $G_F$.}
  \label{fig:forbidden_GF}
\end{figure}

\subsubsection{Induced forbidden patterns}

To each coherent grid-like pattern $P$, we canonically associate an induced $\mathbb{Z}^2$-pattern $P'$ as follows. Pick some valid grid placement $g_{x_0,y_0}$. If $n$ and $m$ are such that $(x_0+n,y_0+m)\in \support{P}$, define $P'_{n,m}$ as the colour of the coding cell containing $(x_0+n,y_0+m)$ (by the grid-like hypothesis). All valid placements differ by at most $1/4$, so by coding coherence, every valid placement yields the same symbolic pattern $P'$.

We now define $G_F$ as the set of coherent grid-like $\patternsize$-patterns $P$ whose induced $\mathbb{Z}^2$ pattern $P'$ contains a forbidden pattern from $F$. Notice that, for a $\patternsize$-pattern $P$, the support of $P'$ contains a $2 \times 2$ square since $\patternsize\geq \sqrt{2}$.


\subsection{Equivalence with the original domino problem}

We prove Theorem~\ref{thm:main} by reduction to the classical nearest-neighbour domino problem. 

Given a geometric tileset $\tileset$ and an instance $(A,F)$ of the nearest-neighbour $\ZZ^2$-domino problem, define $X'$ as the set of $\tileset$-tilings $\tiling$ that avoid the forbidden patterns $G = G_{grid} \cup G_{coding} \cup G_F$ defined above and such that $\pi(\tiling)\in X$, that is, removing the colours gives a tiling in $X$.

We prove that there is a valid configuration $c$ for $(A,F)$ if, and only if, $X'$ is not empty. 

\begin{proof} [($\Rightarrow$)]
Let $\tiling$ be a geometric tiling in $X$ (which is nonempty by assumption), and let $\tiling_0$ be the corresponding checkered tiling (see Section~\ref{sec:placeholder}).
  
For $x_0, y_0\in\RR^2$, define the unit grid $g_{x_0,y_0} = \{(x,y)\ :\ (x-x_0)\in\ZZ \vee (y-y_0)\in\ZZ \}$. 
 
There are countably many triple points and horizontal or vertical segments in the boundaries of the tiles and of the cells, so there is a choice of $x_0$ and $y_0$ such that $g_{x_0,y_0}$ intersects none of them (these are the conditions for being grid-like as in Definition~\ref{def:gridlike}).

Now define the placeholder tiling $\tiling_1$ as follows: for every tile of $\tiling_0$, we choose the only placeholder tile that corresponds to $g_{x_0,y_0}$ in the sense that a cell that intersects a row, resp. a column, of $g_{x_0,y_0}$ is a row cell, resp. a column cell (it is a coding cell if it intersects both). 

By construction, we did not create a pattern from $G_{grid}$, since every $\patternsize$-pattern is grid-like using the grid $g_{x_0,y_0}$.

We define a coloured tiling $\tiling'$ as follows: for each coding cell, if $(x_0+n, y_0+m)$ is the closest vertex of the grid $g_{x_0,y_0}$, then the cell gets assigned the colour $c_{n,m}$.

\begin{figure}[htp]
    \center
    \includegraphics[width=0.8\textwidth]{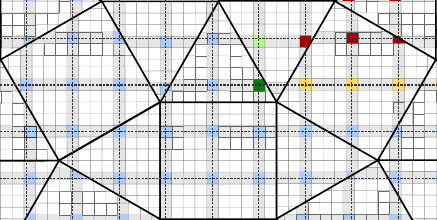}
    \caption{Construction of the coloured tiling $\tiling'$ from a geometric tiling $\tiling$ and a configuration $c$.}
  \end{figure}
  
Since each tile contains at least one vertex of $g_{x_0,y_0}$, the cell containing this vertex is a coding cell, so every codng cell in this tile is at distance at most $1/4$ from a vertex of $g_{x_0,y_0}$. It follows that two different coding cells that are close get assigned the same vertex and no pattern from $G_{coding}$ may appear.

By construction, $\tiling'$ contains no pattern in $G_F$ since $c$ contains no pattern in $F$.
  
We have proved that $\tiling'$ does not contain any pattern from $G$. Since $\pi(\tiling')=\tiling \in X$, $\tiling'\in X'$.
\end{proof}

Notice that the construction of $\tiling'$ from $\tiling$ may not be computable (especially relative to the choice of $x_0$ and $y_0$) but this is not a part of reduction.

\begin{proof}[($\Leftarrow$)]
Let $T'$ be a tiling in $X'$. We build a function $\position : \mathbb{Z}^2 \to \mathbb{R}^2$ such that for any $(n,m)$, $\position(n,m)$ is in the interior of a coding cell. This defines a configuration $c$ where $c_{n,m}$ has the colour of the cell containing $\position(n,m)$.

Up to translating the whole tiling, assume that $(0,0)$ is in the interior of a coding cell of $T'$. As the $\patternsize$-pattern $P$ around $(0,0)$ is grid-like, choose a position $\position(0,0)$ at distance at most $\tfrac{1}{4}$ from $(0,0)$ such that $g_{\position(0,0)}$ is a valid grid position for $P$.

We repeat this process inductively around positions already defined, by increasing distance from $(0,0)$. Let $(n,m)\in\mathbb{Z}^2$ such that $\position(n,m)$ is defined but not $\position(n+1,m)$ (other directions are similar).
As the pattern $P_{n,m}$ around $\position(n,m)$ is grid-like with $g_{\position(n,m)}$ as a valid grid placement, the position $\position(n,m) + (1,0)$ is vertex of $g_{\position(n,m)}$ so it is inside a coding cell.
As the pattern $P_{n+1,m}$ around $\position(n,m)+(1,0)$ is grid-like, pick a position $\position(n+1,m)$ at distance at most $\tfrac{1}{4}$ from $\position(n,m)+(1,0)$ such that $g_{\position(n+1,m)}$ is a valid grid position of $P_{n+1,m}$.\bigskip

Even though this process is not deterministic, it is possible to prove that the configuration $c$ does not depend on the order in which the values of $\position$ are defined or on the choices made in the construction. For simplicity, we prove the weaker statement that any forbidden pattern in $c$ yields a pattern in $G_F$.

Suppose that the pattern $c_{(0,1), (1,1)}$ is forbidden; the same proof holds for other positions and directions. If $\position(1,1)$ was defined from $\position(0,1)$, this would create a pattern from $G_F$ by construction. Assume that $\position(1,1)$ was defined from $\position(1,0)$. Since $\position(0,1)$ and $\position(1,0)$ were both defined from $\position(0,0)$, it follows that $p = \position(1,0) + (0,1)$ is at distance at most $\tfrac{1}{2}$ from $p' = \position(0,1)+(1,0)$ (by triangular inequality), and they are both inside some coding cell (in possibly different tiles $t$ and $t'$).

Consider the $\patternsize$-pattern $P$ around $\position(1,1)$: it contains $p'$ and so it contains the whole tile $t'$.
By assumption on the size of the tiles, $t'$ contains some vertex of the grid centered on $\position(1,1)$ which is therefore in a coding cell. $p'$ is in a coding cell of $t'$ so, by definition of the placeholder tile, it is at distance at most $\tfrac{1}{4}$ from a vertex $v$ of the grid. It follows that:
\begin{align*}d(\position(1, 1), v) &\leq d(\position(1, 1), \position(0,1)+(1,0))\\&\qquad + d(\position(0,1)+(1,0), \position(1,0)+(0,1)) + d(\position(1,0)+(0,1),v) \\&< \tfrac{1}{4} + \tfrac{1}{2} + \tfrac{1}{4} = 1,\end{align*}
which implies that $\position(1, 1) = v$ (they are vertices on a unit grid). Since $p'$ is in a coding cell and is at distance $<\tfrac{1}{4}$ from a coding cell with colour $c_{1,1}$, the two cells have the same colour by $G_{coding}$. It follows that the grid centered on $\position(0,1)$ "sees" the pattern $c_{(0,1), (1,1)}$, so the $\patternsize$-pattern around $\position(0,1)$ belongs to $G_F$.
\bigskip

We now prove that $c$ is valid for $F$.
The previous argument shows that any pattern $P'\sqsubset c$ of size $2\times 2$ appears in a grid-like $\patternsize$-pattern. The tiling $T'$ avoids $G_F$, so $P'$ does not contain any forbidden pattern in $F$.
\end{proof}

\begin{remark}
Since tilings in $X$ can be arbitrarily complex, there is no reason for $\domino_X$ to be $\Pi_1$-complete. However, if there is an algorithm that, given $r$, enumerates all central $r$-patterns of all $t\in X$, we have a semi-algorithm for $\domino_X$ by trying all possible colourings of these patters and rejecting if we found no valid colouring. When $X$ is a subshift, it is enough to have such an algorithm for enumerating locally allowed patterns by a compactness argument; this corresponds to effective subshifts. If this case $\domino_X$ is $\Pi_1$-complete.
\end{remark}

\subsection{Hidden quasi-isometries}\label{sec:quasi-isometry}

We notice that our reduction to the domino problem relies on the existence of a quasi-isometry between the adjacency graph of the tiling and $\mathbb Z^2$; since quasi-isometries have appeared in other contexts as an invariant between spaces that preserve the difficulty of the domino problem, we discuss this phenomenon in this section.

\begin{definition}[Adjacency graphs]Fix a set of geometric tiles, a geometric tiling $x$ with FLC and an enumeration of all possible $2$-tile patterns in $x$. The adjacency graph $G(x)$ is the graph whose vertices correspond to tiles, edges link tiles that are adjacent in $x$, and each edge is labelled according to the corresponding $2$-tile pattern. 
\end{definition}

The problem $\domino_{\{x\}}$ can be seen as the domino problem on the graph $G(x)$. Most existing results regarding the domino problem take place on a single fixed Cayley graph. A specificity of the $\domino_X$ problem is that it corresponds to a domino problem on a family of labelled graphs $G(X) = \{G(x) : x\in X\}$\footnote{This family can be described by forbidding subgraphs that correspond to forbidden patterns in $X$, which means that it is a graph subshift in the sense of \cite{arrighi2023}.}. Still, our result has some similarities with existing results on fixed graphs.

\begin{definition}[Quasi-isometry]A map $\phi : X\to Y$ between metric spaces $(X,d_X)$ and $(Y,d_Y)$ is a \emph{quasi-isometry} if there exists $k>0$ such that
\[
  \frac{1}{k}\cdot d_X(x,x') - k \leq d_Y(\phi(x),\phi(x')) \leq k\cdot d_X(x,x') + k 
\quad \text{ and } \quad
 Y = \bigcup_{x\in X} B(\phi(x), k).
\]
\end{definition}

\begin{remark}
  The adjacency graph of a geometric tiling with finite local complexity is planar and quasi-isometric to $\mathbb{Z}^2$. 
\end{remark}

In fact, our reduction yields an explicit quasi-isometry $G(x)\to \mathbb Z^2$ for every $x\in X$: draw the grid on $x$, fix an origin $(0,0)$ for the grid arbitrarily, and define $\phi : G(x)\to \mathbb Z^2$ such that $\phi(v)$ is any element of $\mathbb{Z}^2$ in the grid inside $v$.

In the case of finitely presented groups it is known that the undecidability of the domino problem is a quasi-isometry invariant \cite{cohen2015}, in addition to other properties such as the existence of strongly aperiodic subshifts \cite{barbieri2022}. In particular, the domino problem on any Cayley graph that is quasi-isometric to $\mathbb{Z}^2$ is undecidable. However, adjacency graphs for geometric tilings are very far from Cayley graphs; for example, they are not (vertex-)transitive. 

Shortly after the present paper was submitted, Barbieri and Bitar \cite{barbieri2025} extended these invariance results under quasi-isometry beyond Cayley graphs to structures called blueprints; see in particular Section 6 in \emph{op.cit} that relate to geometric tilings. While their results are more general than Theorem~\ref{thm:main} on most aspects (they apply to all groups and not only $\mathbb Z^2$, for example), they only cover the case where the set of geometric tilings is a subshift as of the currently available version\footnote{Following personal communication with the authors, a future version of \emph{op. cit.} may apply to arbitrary sets of tilings.}.

\section{Finite local complexity is undecidable for geometric tilings}\label{sec:undecFLC}

Given the important role played by the FLC hypothesis in carrying symbolic results to the geometric setting, we study the decidability of this property.

\begin{definition}
The FLC problem for purely geometric subshifts:
  \begin{description}
  \item[Input:] a shapeset $\shapeset$ of polygonal tiles.
  \item[Output:] does the purely geometric subshift on $\shapeset$ have finite local complexity?
  \end{description}
\end{definition}

\begin{theorem}
The FLC problem for purely geometric subshifts is $\Sigma_1^0$-complete and therefore undecidable.
\end{theorem}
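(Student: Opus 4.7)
I would prove this by reduction from $\overline{\haltingprob}$, which is $\pizu$-complete. Given a Turing machine $M$, the plan is to construct a shapeset $\shapeset_M$ consisting of small modifications of unit squares such that the purely geometric subshift on $\shapeset_M$ has FLC if and only if $M$ never halts on the empty input. Since a purely geometric subshift admits no external forbidden patterns, every constraint must be baked into the bump/dent profile of the shapes' edges.

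The shapeset is built in two layers. The first layer is a geometric realisation of a Wang-style simulation of $M$: for each Wang tile $w$ of the simulation we take a unit square carrying uniquely shaped bumps/dents on its edges that encode $w$'s colours, so that two such shapes fit edge-to-edge if and only if the underlying Wang colours match. The simulation is designed so that (i) a "blank" tile matches only itself and can tile the plane rigidly on its own, (ii) a "starter" tile, a family of "computation" tiles, and a distinguished "halting" shape $\shape_h$ together form a vertical column encoding one run of $M$, and (iii) $\shape_h$ can appear in some valid tiling if and only if $M$ actually halts. The second layer is a single "fault" shape: an almost-unit square whose bottom edge carries a dent profile that is the exclusive geometric mate of the top bumps of $\shape_h$, but whose other edges are essentially flat, so that once placed the tile above it may be translated horizontally by any offset $\alpha$ in a small interval $[0,\varepsilon]$. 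The fault shape thus functions as a geometric lock-and-key: it can be placed only directly above an $\shape_h$-tile, and each placement creates a continuous family of distinct 2-tile patterns with its upper neighbour.

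Correctness then splits cleanly. If $M$ does not halt, no valid tiling of $\shapeset_M$ can contain $\shape_h$, because its bump signature recursively forces a complete halting trace below, which does not exist; hence no fault shape can be placed either, every tiling is rigidly edge-to-edge, and only finitely many 2-tile patterns occur, so FLC holds. Conversely, if $M$ halts in $n$ steps, one can materialise a Wang configuration with a single computation column of length $n$ ending at $\shape_h$, fill the rest of the plane with blank tiles, place a fault shape above $\shape_h$ at offset $\alpha$, and extend by blanks above; for each $\alpha \in [0,\varepsilon]$ this yields a valid tiling exhibiting a distinct 2-tile pattern, so infinitely many 2-tile patterns appear and FLC fails.

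The main obstacle is designing the geometric lock-and-key cleanly: the bump signature of $\shape_h$ and the dent profile of the fault shape's bottom must be unique enough that no combination of other shapes' edges reproduces the fit, and two fault shapes must not be chainable to create shifts independently of an $\shape_h$ below. A secondary point to check is that the horizontal offset introduced above the fault shape can be absorbed globally, which can be arranged for instance by letting the fault shape span only a bounded horizontal region padded on both sides by slightly modified blank tiles that relax the edge-to-edge constraint at the boundary. The entire construction is explicitly computable from the description of $M$, yielding the desired $\pizu$-hardness.
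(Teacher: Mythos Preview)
Your reduction hinges on the claim that the halting shape $\shape_h$ can appear in some valid tiling if and only if $M$ halts on the empty input, and this step fails. Placing $\shape_h$ only forces, through the edge matching, that each row below is a one-step predecessor of the row above; nothing in a purely local Wang-style encoding forces this chain of predecessors to reach the starter row after finitely many steps. For a machine that loops on the empty input but whose halting state is reachable from some \emph{other} configuration (say $\delta(q_0,0)=(q_0,0,R)$ and $\delta(q_0,1)=(q_h,1,S)$), one can write down an infinite descending chain of predecessor configurations---the head drifting off to $-\infty$ over a tape containing a single $1$---that tiles the entire lower half-plane with $\shape_h$ at the top and no starter row anywhere. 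Your ``single vertical column'' picture also suppresses the fact that a Turing-machine space-time diagram is genuinely two-dimensional, with an unbounded tape. This well-foundedness obstacle is exactly why the paper builds on the Robinson/Gurevich--Koryakov hierarchy: the nested squares force a fresh computation from a fixed seed inside every square, so the presence or absence of the halting state is tied to the actual run on the empty tape rather than to arbitrary backward extensions.

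The fault-tile mechanism has a separate gap. A single tile with a flat top embedded in an otherwise rigid edge-to-edge grid cannot support a continuum of offsets: the rigid columns on either side fix the horizontal position of the row above, pinning $\alpha$ to a single value, and any finite buffer of ``relaxed'' blanks has the same problem at its outer boundary. To break FLC one needs an \emph{infinite} line along which sliding is unconstrained, and arranging for such a line to exist in exactly one of the two cases is precisely what the paper obtains from the Gurevich--Koryakov border: the smooth outer edge of the green squares becomes, in the relevant regime, an infinite shear line separating two half-planes that can be translated freely against each other.
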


We start by proving the upper bound, namely that it is recursively enumerable.

\begin{lemma}
	The FLC problem is in $\Sigma_1^0$.
\end{lemma}
\begin{proof}
	When a geometric subshift does not have FLC, it has tilings with a shear line (see \cite{kenyon1992}): that is to say that two independent half planes can be tiled independently and shifted continuously relative to each other along this line. In order to prove that FLC is a $\Sigma^0_1$ property, we will show that having a shear line is a $\Pi_1^0$ property.

	When a shear line appears, then its direction is necessarily the direction of one of the sides of some shape of the input. There is a finite number of such directions. A semi-algorithm to decide the existence of a shear line thus consists in finding if one of these directions cuts the plane in two halves such that each half plane may be perfectly tiled.

	As the existence of a tiling for a shapeset is a $\Pi^0_1$ problem, so is the existence of a tiling of the half plane. The search for tileable half planes along the finite directions of the sides of the shapes may be done concurrently. If no direction can be tiled in perfect half planes, then at some points the finite number of searches will all halt and the algorithm halts, otherwise it runs forever.
\end{proof}

\begin{lemma}
	The FLC problem is $\Sigma_ 1^0$-hard.
\end{lemma}
\begin{proof}
We proceed by reduction to the halting problem for Turing machines. For each Turing machine $M$, we build a shapeset $\shapeset$ such that the purely geometric subshift is not empty and has finite local complexity iff $M$  halts on the empty tape.\bigskip

We start from $1\times 1$ square tiles decorated with the Robinson~\cite{robinson1971} decorations as modified by Gurevich and Koryakov~\cite{gurevich1972}: this tileset tiles the plane periodically iff the Turing machine encoded in it halts and aperiodically otherwise, as shown in \autoref{fig:robGK}.
As we want to effectively embed computations in non-intersecting hierarchical squares of Robinson tilings, we use a 10-tiles variant of the Robinson tileset where odd levels of the hierarchy of squares are blue and even levels of the hierarchy of squares are red. 
Robinson's tileset can be defined either by purely geometric or symbolic tilings: see Fig.~\ref{fig:robinson_alternating_squares}. Gurevich and Koryakov's tileset can also be made purely geometric: see Figs.~\ref{fig:gk_tiles} and \ref{fig:robinson-gk_computation}.

In the periodic case, the only tilings consist of the smallest Robinson squares containing the whole computation repeated periodically, as shown in Fig.~\ref{fig:GK_period}. The aperiodic case consists either of only Robinson structures, or of two half-planes or four quarter-planes of Robinson structures separated by a shear line of Gurevich-Koriakov border, illustrated on Fig.~\ref{fig:GK_infinite_sliding}.
\pgfdeclareimage{robinson}{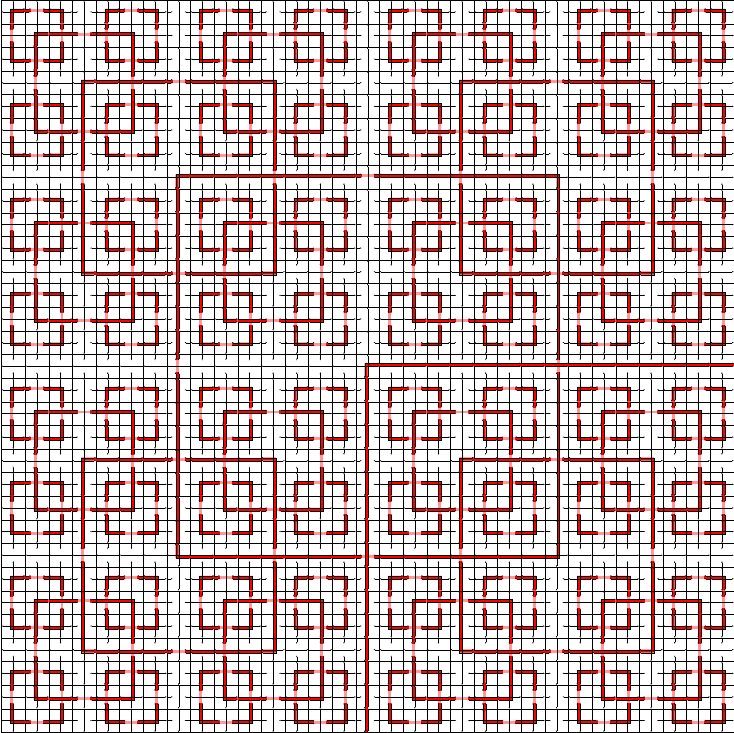}
\definecolor{greenborder}{HTML}{2d9e2f}

\tikzset{robinsonRect/.pic={
\begin{scope}
 \clip (3,3) rectangle +(6.8,6.8);
 \node[anchor=south west] at (0.05,0.08) {\pgfuseimage{robinson}};
 \fill[color=greenborder] (3.2,3) rectangle +(6.4,.2);
 \fill[color=greenborder] (3.2,9.8) rectangle +(6.4,-.2);
 \fill[color=greenborder] (3,3.2) rectangle +(.2,6.4);
 \fill[color=greenborder] (9.8,3.2) rectangle +(-.2,6.4);
 \draw (9.8,3) rectangle +(6.8,6.8);
\end{scope}
}}

\begin{figure}

\scalebox{.5}{
 \begin{tikzpicture}
  \begin{scope}[shift={(15,-.5)}]
  \pic at (0,0) {robinsonRect};
  \pic at (6.8,0) {robinsonRect};
  \pic at (6.8,6.8) {robinsonRect};
  \pic at (0,6.8) {robinsonRect};
  \end{scope}

  \begin{scope}
    \node[anchor=south west] at (3.15,3.125) {\pgfuseimage{robinson}};
  \end{scope}

 \end{tikzpicture}
 }

 \caption{\label{fig:robGK} Two tilings by the Gurevich-Koryakov tileset. On the left: when the embedded Turing machine does not halt, it gives aperiodic tilings. On the right: when it does halt, the smallest square that witnesses the halting state gets a special colour on the border that forces a periodic tiling.}
\end{figure}
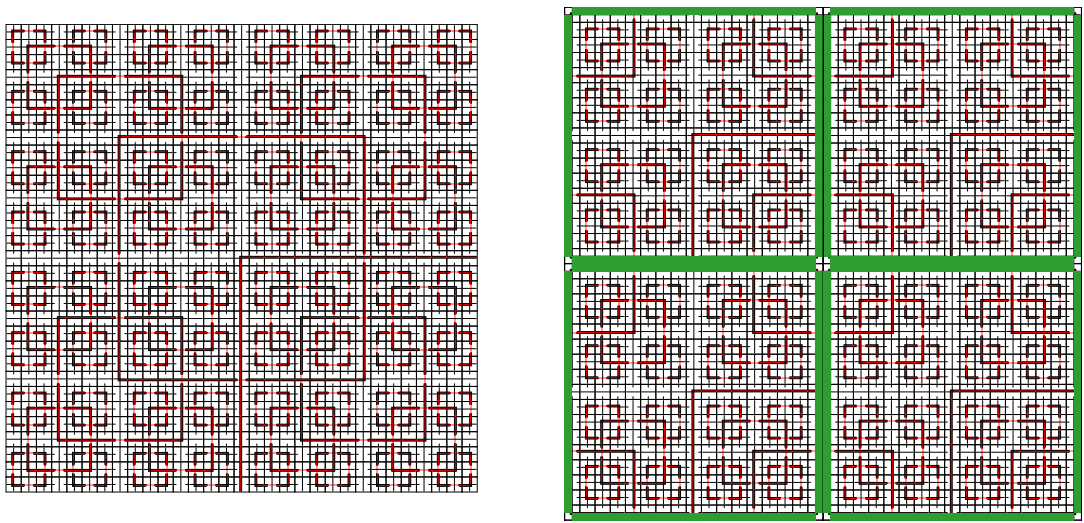


\begin{figure}[htp]
  \center
  \includegraphics[height=4cm]{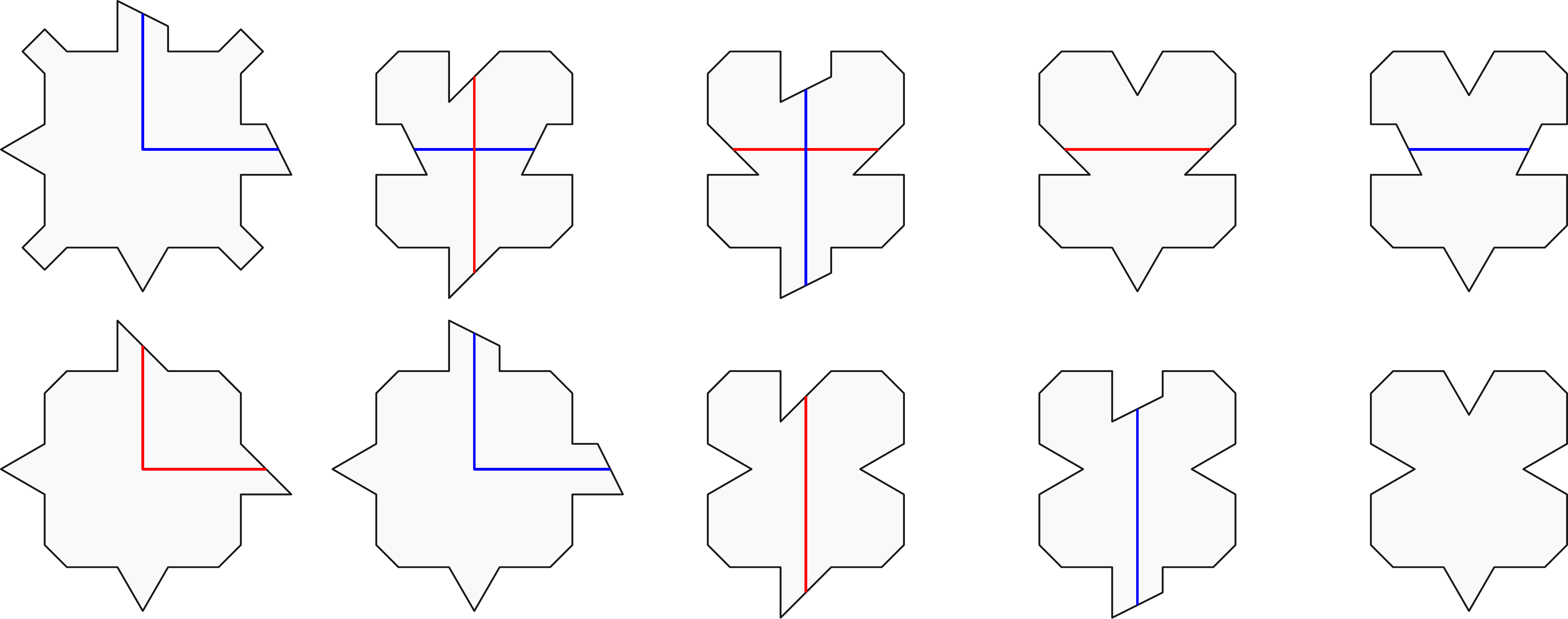}
  \caption{A purely geometric presentation of the alternating squares variant of Robinson tiles.
    Decorations are shown for clarity only, and the computing layer is not represented.}
  \label{fig:robinson_alternating_squares}
\end{figure}

\begin{figure}[htp]
  \center
  \includegraphics[height=4cm]{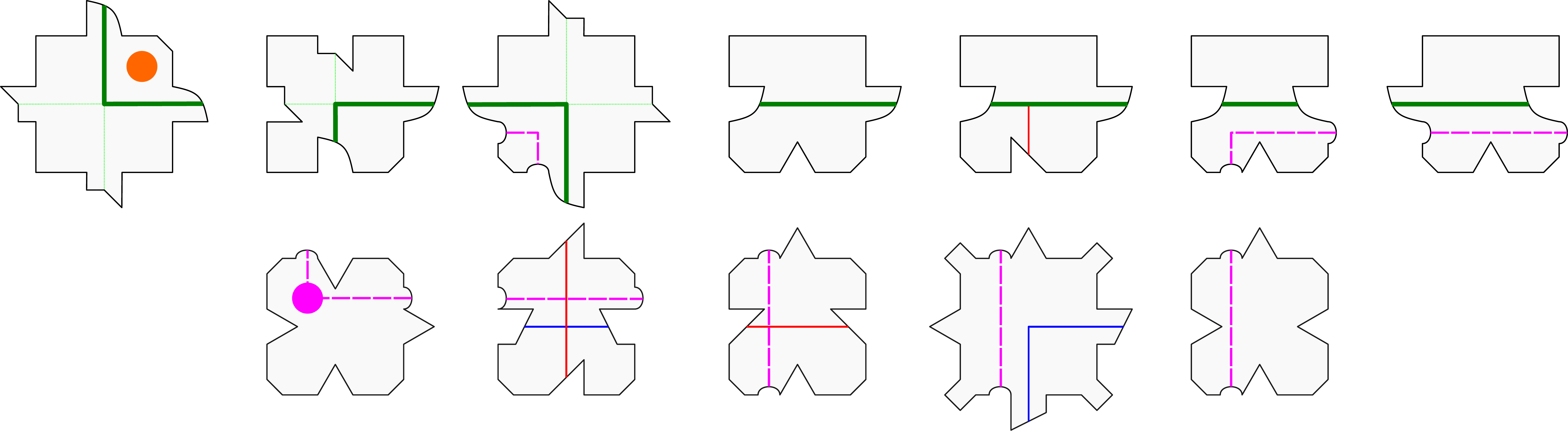}
  \caption{New tiles (in addition to Fig.~\ref{fig:robinson_alternating_squares}) giving the Gurevich-Koryakov variant.
  If a tile with the violet signal appears (bottom row), then the next square in the hierarchy will be green (top row) which forces a periodic tiling by green squares of the same size. Green squares can not appear otherwise.
  Every tile where the halting state appears on the computing layer must start a violet signal (only one such tile is represented here). 
  Decorations are shown for clarity only, and the computing layer is not represented.
  }
  \label{fig:gk_tiles}
\end{figure}

\begin{figure}[htp]
 
  \begin{subfigure}{0.45\textwidth}
    \center
    \includegraphics[height=4cm]{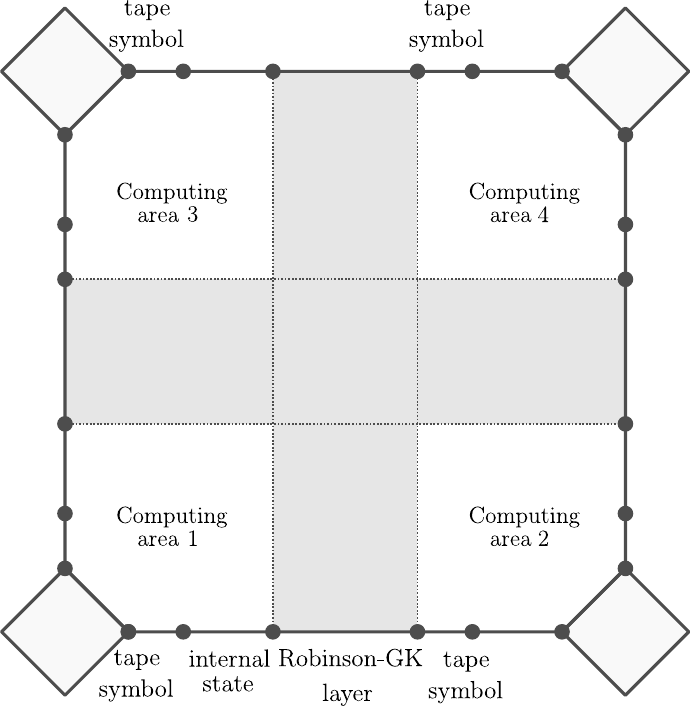}
    \caption{Each Robinson or Gurevich-Koryakov computing tile contains 4 computing areas, each area behaving like a Wang tile of the usual Turing machine simulation tileset (see for example \cite{Jeandel2020}).}
  \end{subfigure}
  \hfill
  \begin{subfigure}{0.45\textwidth}
     \center
    \includegraphics[height=4cm]{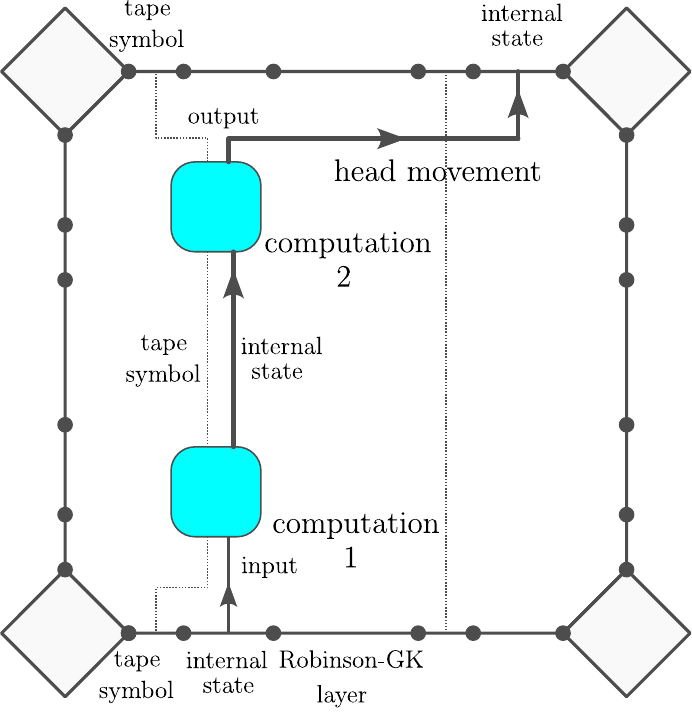}
    \caption{Schematics of a computing tile where the Robinson or Gurevich-Koryakov layer is abstracted. Each tile simulates two computation steps of a Turing machine.}
  \end{subfigure}

  \begin{subfigure}{\textwidth}
    \center \includegraphics[height=3cm]{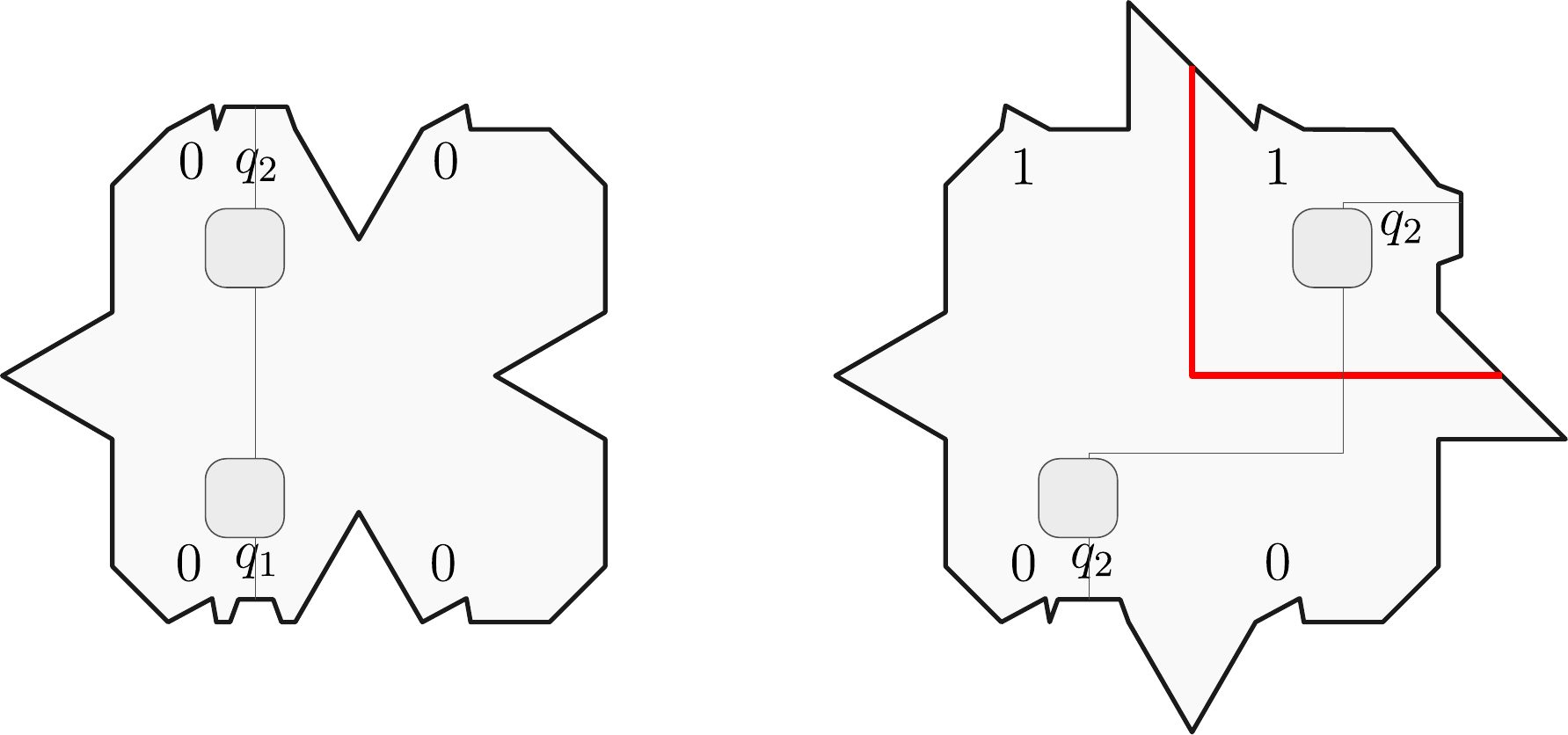}
    \caption{Two examples of computing tiles. The computing path is represented with thin gray lines and boxes.\smallskip\\
      Left: starting from state $q_1$, the machine reads a $0$ and performs two computation steps that result in writing a $0$ and going in state $q_2$ without moving (the intermediate state and symbol is not specified).\\
      Right: starting from state $q_2$, the machine reads a $0$, writes a $1$, stays in state $q_2$ and moves to the right. Then, the machine again reads a $0$, writes a $1$, stays in state $q_2$ and moves to the right.}
    \end{subfigure}
  \caption{Embedding computation in tiles.
    Additional \emph{auxiliary tiles} are used to initialise the tape and connect the computation areas.} 
  \label{fig:robinson-gk_computation}
\end{figure}

\begin{figure}[htp]
  \center
  \includegraphics[height=3cm]{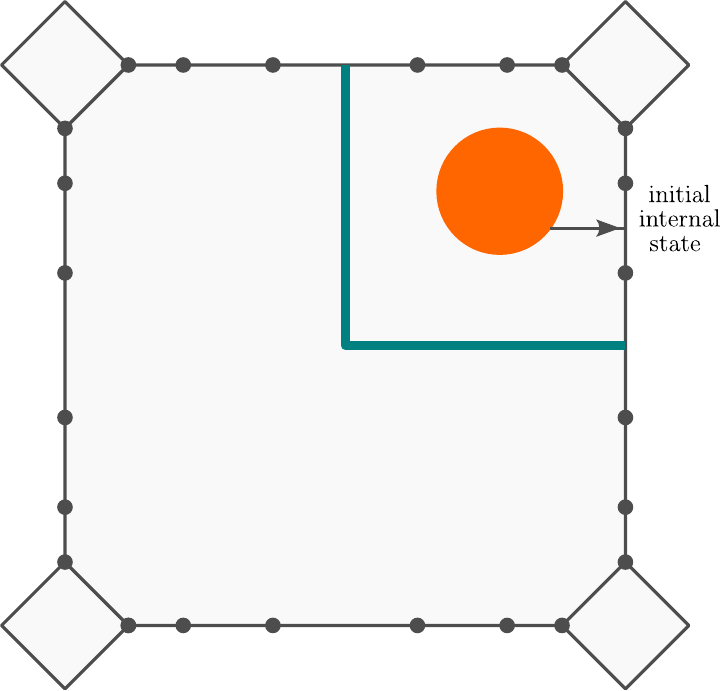}\hspace*{3cm}
  \includegraphics[height=3cm]{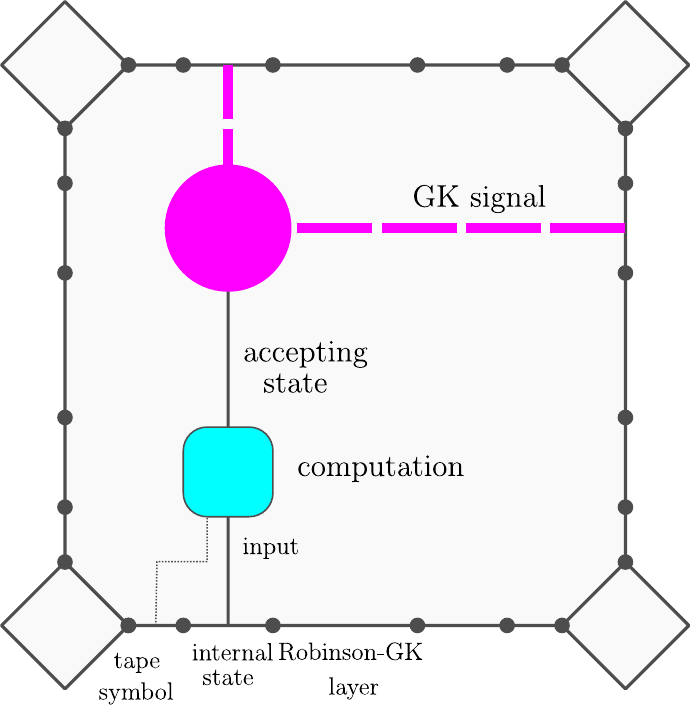}
  \caption{Right: the seed tile starts the computation (the square can be blue or green). Left: any computation tile that reaches the halting state triggers a violet signal.}
  \label{fig:robinson-gk_seed-and-signal}
\end{figure}

\begin{figure}[p]
  \center
  \includegraphics[width=\textwidth]{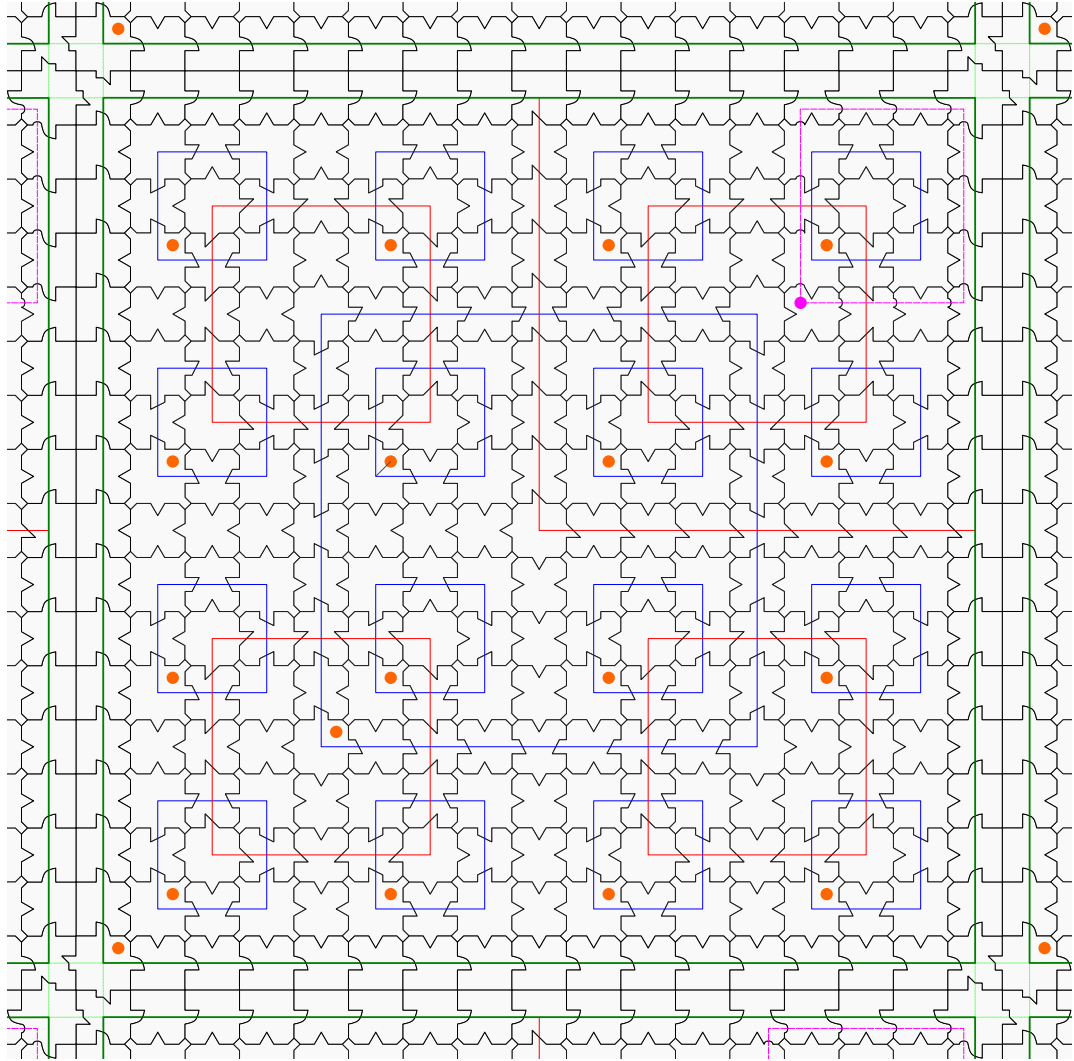}
  \caption{A period of the Gurevich-Koryakov tiling. The computation layer is not represented.}
  \label{fig:GK_period}
\end{figure}

\begin{figure}[p]
  
  \includegraphics[width=\textwidth]{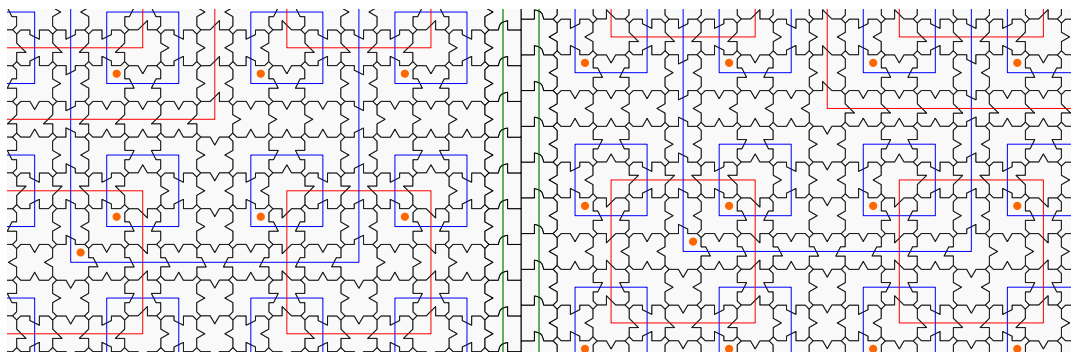}
  \caption{If the simulated machine does not halt, some tilings consist of two "infinite green squares", that is,  independent half-planes. Such half-planes may be shifted continuously relative to each other as only the corners of the green squares synchronise.}
  \label{fig:GK_infinite_sliding}
\end{figure}

Let us adapt this construction to our setting. For a given Turing machine, we define the associated geometric subshift using Gurevich and Koryakov's tileset with one modification: the outer borders of the green squares do not have any markings/indentations except at the corners (to enforce periodicity when the machine halts and the green squares are finite; see Fig.~\ref{fig:GK_period}). In the case where the machine does not halt, there exist tilings consisting of two half-planes of Robinson structures separated by a green border line called a \emph{shear line}. Any vertical shift of a half-plane relative to the other is possible as shown on Fig.~\ref{fig:GK_infinite_sliding}. This means that the subshift $X$ does not have finite local complexity when the machine does not halt, so $X$ has finite local complexity if and only if the machine halts.
. 
\end{proof}

\begin{remark}[FLC for subshifts]
  If $M$ does not halt, any individual tiling in the subshift $X$ has finite local complexity (even though $X$ doesn't), since it contains at most one shear line with a given offset which yields only finitely many $2$-shapes patterns. The construction can be tweaked so that individual tilings do not have FLC in this case by duplicating our shapeset into squares of size 1 and squares of size $\varphi = (1+\sqrt{5})/2$ (or any irrational number). Tiles inside a finite or infinite green square synchronise so that only one size is used. However, when there is a shear line (in the case where the machine does not halt), the two half-planes may use different shapesets, as illustrated on Fig.~\ref{fig:infinite_local_complexity}. The offset between adjacent tiles across the shear lines are the multiples of $\varphi$ modulo $1$ (plus an initial offset) which are dense in the interval $[0,1]$. So such a tiling does not have finite local complexity.
  \begin{figure}[htp]
    \center
    \includegraphics[width=\textwidth]{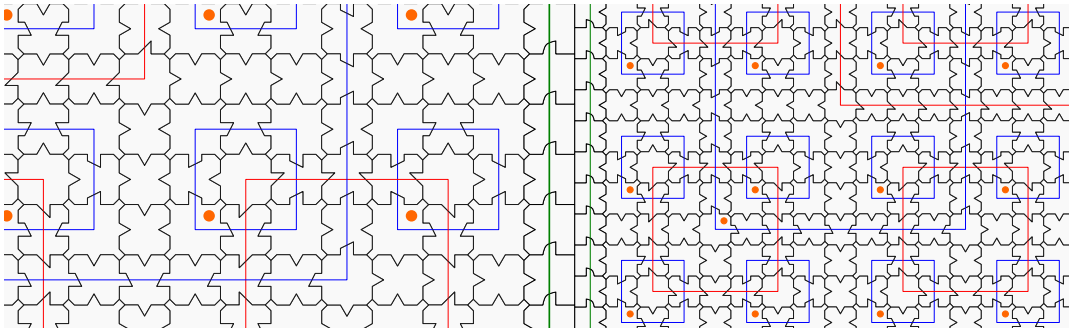}
    \caption{A tiling with infinite local complexity. Tiles in the left half-plane are scaled by $\varphi$.}
    \label{fig:infinite_local_complexity}
    \end{figure}
\end{remark}



\appendix

\section{Aperiodic Wang tilesets from aperiodic geometric tilesets}
In this section, we present a folklore construction to define a Wang tileset from a purely geometric subshift in some cases. This is relatively orthogonal to Section \ref{sec:dominoUndec} where we show how to simulate Wang tilesets in symbolic-geometric tilings on arbitrary shapes.

This construction was first used in \cite[\S 11.1]{grunbaum1987} for Kite-and-Darts Penrose tilings to produce an aperiodic Wang tileset of $32$ tiles, and left as an exercise (11.1.2) for rhombus Penrose tilings to produce an aperiodic Wang tileset of $24$ tiles; full details on this latter case can be found in \cite[\S 5]{jang2021}. We give an overview of this construction, in particular to shed light on geometric and combinatorial conditions that make it work.
We give a somewhat formal statement below. We state the result and present the construction for rhombus tilings, though it can be applied to other shapes e.g. the Kites-and-Darts.

\begin{theorem}[folklore]
   Let $\shapeset$ be a finite set of rhombus shapes and $\tileset$ a finite labelled tileset on $\shapeset$.
   Let $\subshift$ be an edge-to-edge $\tileset$ nearest-neighbour subshift of finite type, that is, defined by finitely many forbidden patterns made of two edge-adjacent tiles.
   If there exists a shape $\shape_0 \in \shapeset$ such that $\shape_0$ is uniformly recurrent in $\subshift$, then there exists a Wang tileset $\tileset_{\subshift,t_0}$ whose subshift of valid tilings is orbit-equivalent to $X$.

   In particular if $X$ is aperiodic, then $\tileset_{\subshift,t_0}$ is an aperiodic Wang tileset.

   Moreover if the language of $X$ (at least up to the radius of recurrence of $t_0$) can be computed, then $\tileset_{\subshift, t_0}$ can be effectively constructed.
\end{theorem}
We do not give a precise definition of orbit equivalence \cite{lind2021}. The idea is that there is a well-behaved correspondence between orbits of tilings in $X$ and orbits of valid $\tileset_{\subshift,t_0}$ tilings, which is illustrated in Fig.~\ref{fig:square-to-wang}.
Note that here we implicitly consider the continuous shift on Wang tilings instead of the usual discrete one.

Let $\shapeset$ be a finite set of rhombus shapes and $\tileset$ a finite labelled tileset on $\shapeset$.
Let $\subshift$ be an edge-to-edge $\tileset$ nearest-neighbour subshift of finite type, that is, defined by finitely many forbidden patterns made of two edge-adjacent tiles.

In edge-to-edge rhombus tilings, the shapes form \emph{chains} or \emph{ribbons} of rhombuses sharing an edge-direction \cite{kenyon1993}, so that each rhombus of edges $\vec{u}$ and $\vec{v}$ is the intersection of two chains of direction $\vec{u}$ and $\vec{v}$ (and conversely). This is illustrated in Fig.~\ref{fig:chains}.

\begin{figure}[htp]
  \center \includegraphics[width=0.5\textwidth]{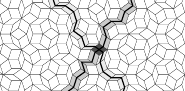}
  \caption{Two chains intersecting in an edge-to-edge rhombus tiling.}
  \label{fig:chains}
\end{figure}

Let $t_0$ be a rhombus with edge directions $\vec{u}_0$ and $\vec{v}_0$.
Call a $t_0$-\emph{chain-square} a pattern delimited by two pairs of chains of same edge direction $\vec{u}_0$ and $\vec{v}_0$, respectively, and intersecting no other chains from these directions, see Fig.~\ref{fig:chain-square}. The four \emph{corners} of such a pattern are four tiles of shape $t_0$.

\begin{figure}[htp]
  \center \includegraphics[width=0.4\textwidth]{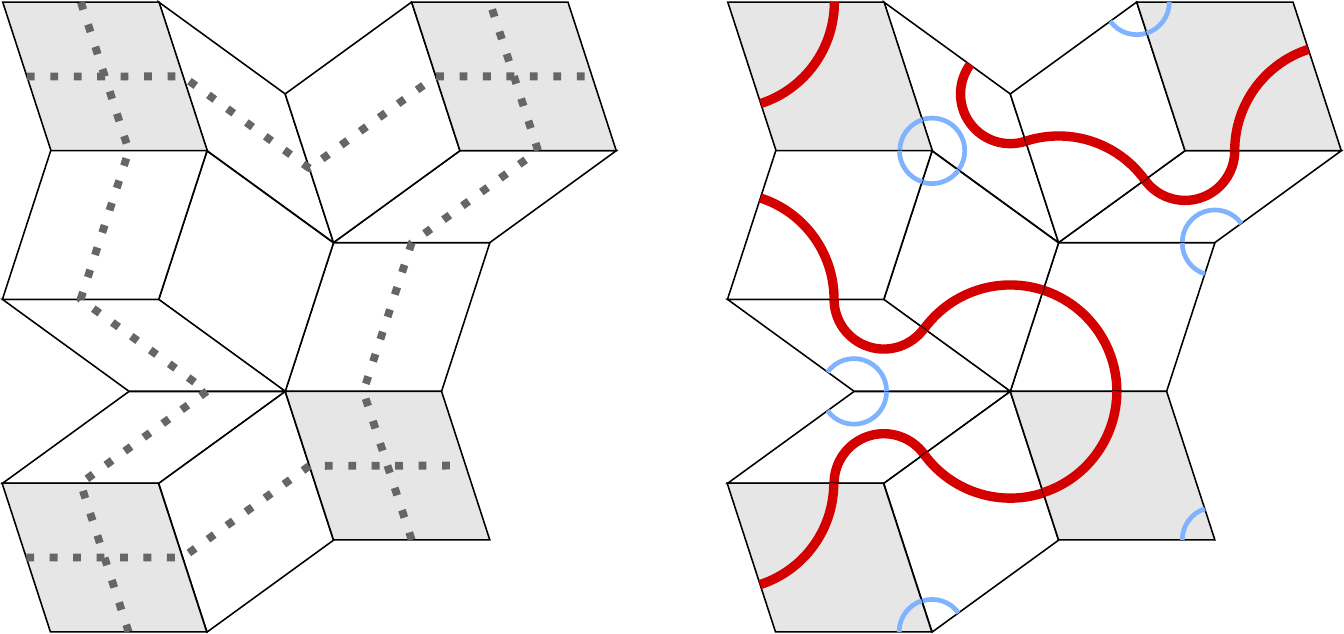}
  \caption{Left: a chain-square. Right: the same shape with the labels.}
  \label{fig:chain-square}
\end{figure}

Assume that $t_0\in \shapeset$ is uniformly recurrent in $\subshift$, that is, there exists $R>0$ such that any disk of radius $R$ in any tiling $\tiling \in \subshift$ contains a tile $t$ such that $\pi(t)=t_0$. Under this assumption, 

\begin{itemize}
\item any tile $t$ in any tiling $\tiling \in \subshift$ belongs to at least one $t_0$-chain-square either as a corner (if $t$ is a copy of $t_0$), in a border (if it shares one edge direction) or in the interior.
\item the size of $t_0$-chain-squares is bounded, so there are finitely many different $t_0$-chain-squares in $\subshift$.
A formal proof of a very similar statement is given in e.g. \cite[Lemma 21]{hellouin2023}.
\end{itemize}

\begin{figure}[htp]
  \center \includegraphics[width=\textwidth]{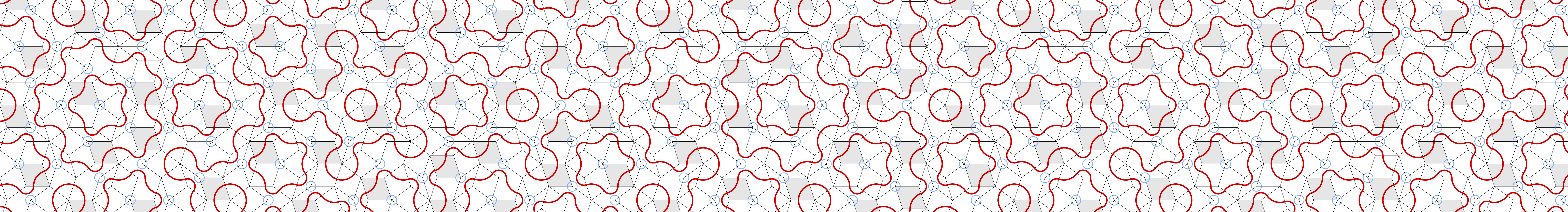}
  \caption{A fragment of labelled Penrose rhombus tiling with occurences of a particular tile highlighted.}
\end{figure}

We define a set $\tileset_{\subshift,t_0}$ of Wang tiles as in Fig.~\ref{fig:square-to-wang}: for each $t_0$-chain-square $P$,  $\mathbf{t}(P)$ is the Wang tile whose labels are the sides of $P$. Horizontal, resp. vertical labels are segments of $\vec{u}_0$-chains, resp. $\vec{v}_0$-chains between two occurrences of $t_0$.

\begin{figure}[htp]
  \center \includegraphics[width=0.5\textwidth]{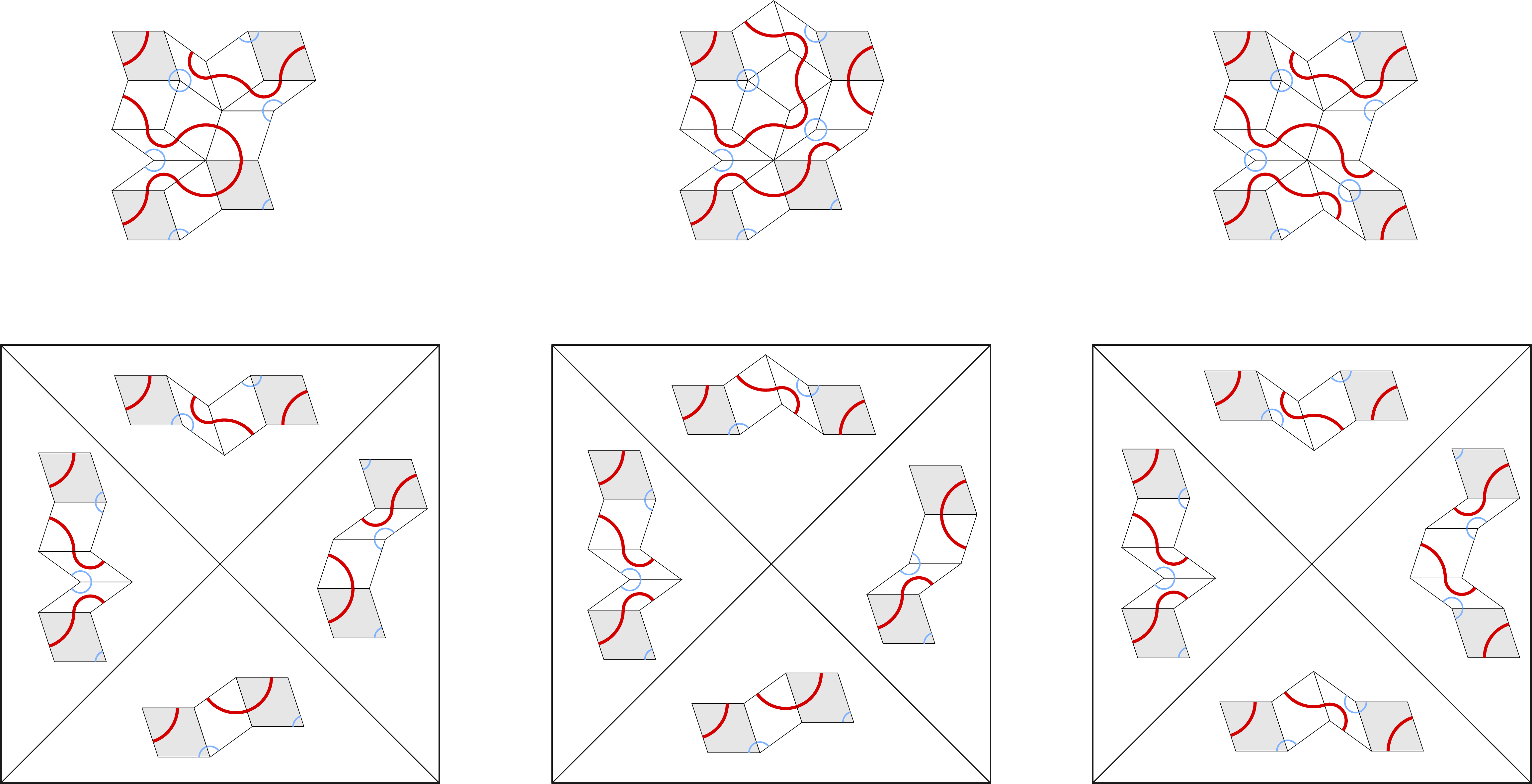}
  \caption{Converting chain-squares to Wang tiles.}
  \label{fig:square-to-wang}
\end{figure}

Each locally allowed pattern of $\tileset_{\subshift,t_0}$ Wang tiles can be converted to a geometric $\tileset$-pattern that is locally allowed in $\subshift$, as is Fig.~\ref{fig:square-patch-wang}, because the subshift $\subshift$ only has nearest-neighbour forbidden patterns. It follows that each tiling in $\subshift$ in which $t_0$ is uniformly recurrent can be converted to a $\tileset_{\subshift,t_0}$ tiling, and conversely.

\begin{figure}[htp]
  \center \includegraphics[width=\textwidth]{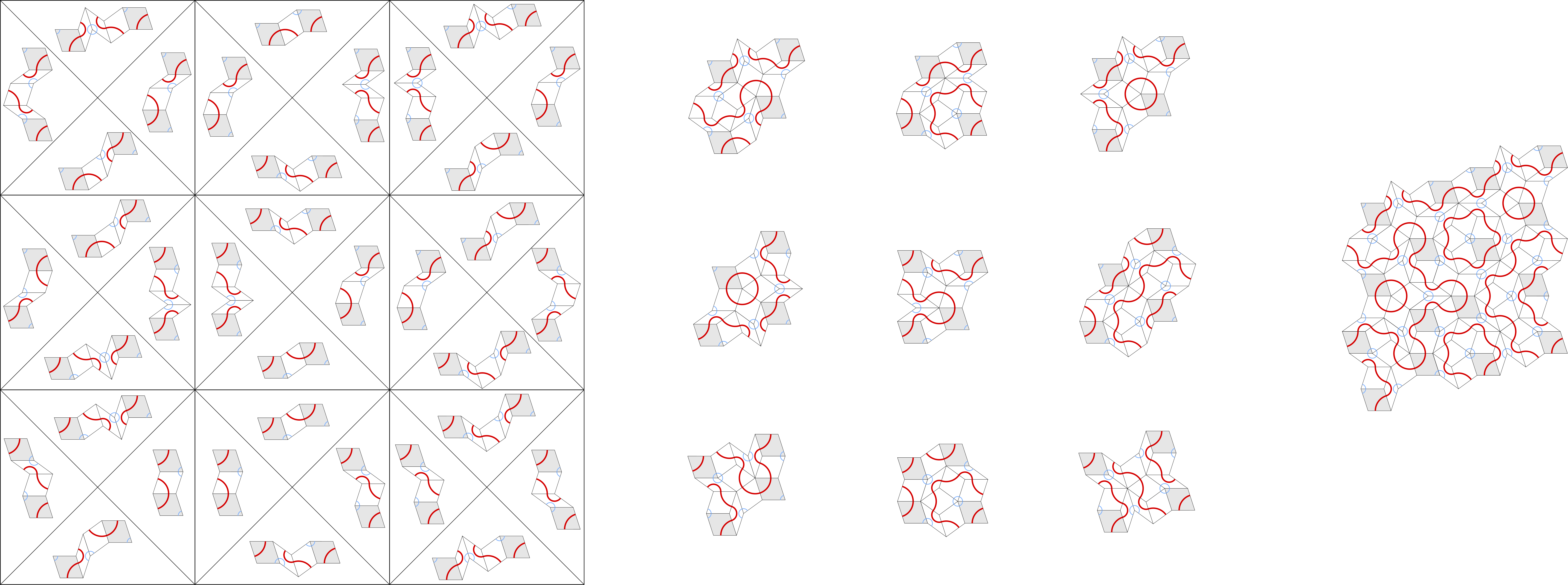}
  \caption{Converting a patch of Wang tiles to a rhombus tiling.}
  \label{fig:square-patch-wang}
\end{figure}

In particular, if $\subshift$ contains no periodic tiling, then neither does $\tileset_{\subshift,t_0}$.
Note however that the symmetries of the geometric tiling are not necessarily preserved. 
Therefore, if $\subshift$ contains no periodic configuration and assuming that one can enumerate the complete set of $t_0$-chain-squares that are locally allowed in $\subshift$, we obtain in this way a corresponding aperiodic Wang tileset.\medskip

\begin{figure}[htp]
  \center \includegraphics[width=\textwidth]{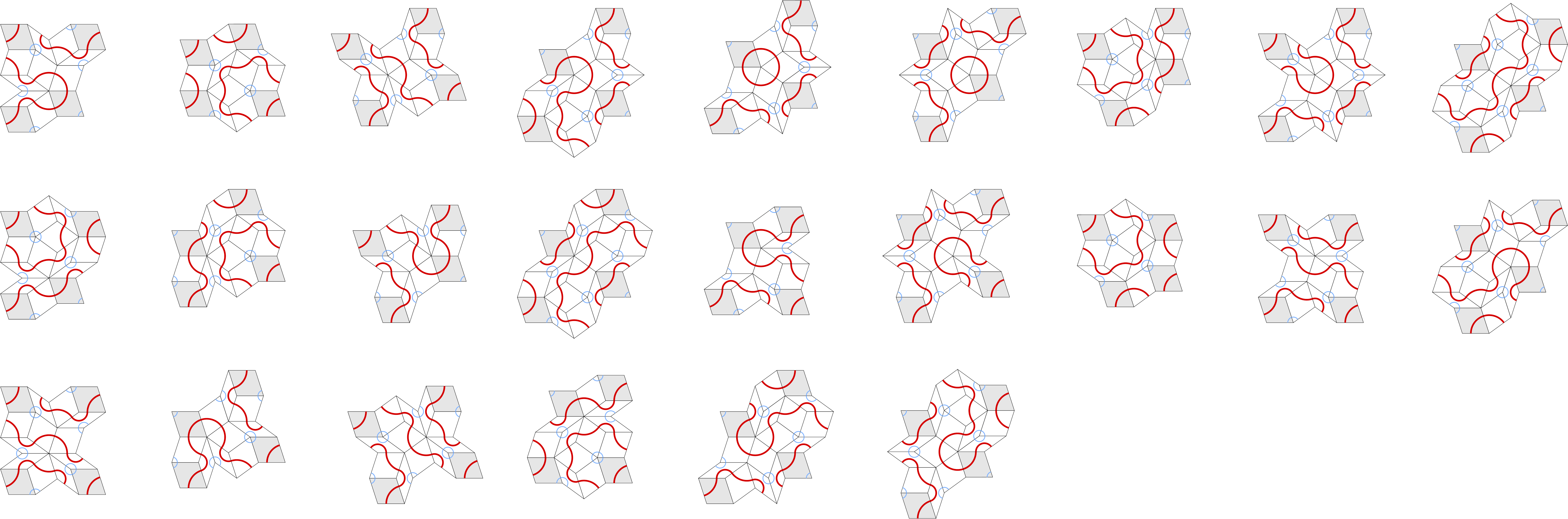}
  \caption{The 24 chain-squares that appear in Penrose rhombus tilings, with labels.}
\end{figure}

The key ingredients for this construction are the chains of tiles, and the presence of a uniformly recurrent tile that ensures that there are finitely many corresponding chain-squares.
Some non-parallelogram tiles, such as the Penrose Kite-and-Darts, have a combinatorial structure that satisfy the first condition. 




\bibliography{wangpotatoes}
\end{document}